\newcommand{\nn}{\nonumber}
\newtheorem{theorem}{Theorem}
\newtheorem{lemma}{Lemma}
\newtheorem{proposition}{Proposition}
\newtheorem{property}{Property}
\newtheorem{example}{Example}
\begin{document}
\title{The Non-Regular CEO Problem}
\author{Aditya~Vempaty and Lav~R.~Varshney
\thanks{A.~Vempaty is with the Department of Electrical Engineering and Computer Science, Syracuse University, Syracuse, NY 13244 USA. (e-mail: avempaty@syr.edu).}%
\thanks{L.~R.~Varshney is with the Department of Electrical and Computer Engineering and the Coordinated Science Laboratory, University of Illinois at Urbana-Champaign, Urbana, IL 61801 USA. (e-mail: varshney@illinois.edu).}}
\maketitle

\begin{abstract}
We consider the CEO problem for non-regular source distributions (such as uniform or truncated Gaussian). A group of agents observe independently corrupted versions of data and transmit coded versions over rate-limited links to a CEO. The CEO then estimates the underlying data based on the received coded observations. Agents are not allowed to convene before transmitting their observations. This formulation is motivated by the practical problem of a firm's CEO estimating (non-regular) beliefs about a sequence of events, before acting on them. Agents' observations are modeled as jointly distributed with the underlying data through a given conditional probability density function.  We study the asymptotic behavior of the minimum achievable mean squared error distortion at the CEO in the limit when the number of agents $L$ and the sum rate $R$ tend to infinity. We establish a $1/R^2$ convergence of the distortion, an intermediate regime of performance between the exponential behavior in discrete CEO problems [Berger, Zhang, and Viswanathan (1996)], and the $1/R$ behavior in Gaussian CEO problems  [Viswanathan and Berger (1997)]. Achievability is proved by a layered architecture with scalar quantization, distributed entropy coding, and midrange estimation. The converse is proved using the Bayesian Chazan-Zakai-Ziv bound.
\end{abstract}

\begin{IEEEkeywords}
multiterminal source coding, mean-square error, midrange estimator, Chazan-Zakai-Ziv bound
\end{IEEEkeywords}

\section{Introduction}
Consider the problem faced by the chief executive officer (CEO) of a firm with a large portfolio of projects that each have an underlying probability of success, say drawn from a uniform distribution on $[0,1]$.  Each of her subordinates will have noisy beliefs about the risks facing the projects: random variables jointly distributed with success probability, e.g.\ according to a copula model (common in mathematical finance to model beliefs about risks\cite{Nelsen2006,GenestM1986,CherubiniLV2004}).  Subordinates must convey these risks, but are not allowed to convene.  The CEO has cognitive constraints that limit the information rate she can receive from her subordinates, requiring subordinates to partition risks into quantal grades like A, B, C, and D, before conveyance.  Such quantized grading is typical in businesses with complex information technology projects \cite{RatakondaWBTG2010}.  Upon receiving information from the subordinate agents, the CEO estimates the underlying success probability to minimize mean squared error (Brier score \cite{PreddSLOPK2009}) before taking action.

This is, of course, a version of the classical problem in multiterminal source coding called the 
CEO problem, but for a previously unanalyzed source.  The CEO problem was first introduced by Berger, Zhang, and Viswanathan \cite{BergerZV1996}, where they considered the source of interest to be a discrete data sequence and studied the asymptotic behavior of the minimal error probability in the limit as the number of agents and the sum rate tend to infinity. It was later extended to the case when the source sequence of interest is continuous Gaussian distributed and a quadratic distortion measure is used as a performance measure \cite{ViswanathanB1997}. Oohama studied the sum rate distortion function of the quadratic Gaussian CEO problem and determined the complete solution to the problem \cite{Oohama1998}; the full rate-distortion region was then found independently by Oohama \cite{Oohama2005} and Prabhakaran, et al.~\cite{PrabhakaranTR2004}. Several extensions to this problem have been studied \cite{YangX2012,TavildarV2005,ChenW2011,WagnerTV2008,ChenZBW2004,GastparE2005,SoundararajanWV2012}, however most of these extensions continue to deal with the quadratic Gaussian setting. Viswanath formulated a similar multiterminal Gaussian source coding problem and characterized the sum rate distortion function for a class of quadratic distortion metrics \cite{Viswanath2004}. In \cite{TavildarV2005,ChenW2011}, the authors consider the vector Gaussian case and study the sum rate for vector Gaussian CEO problem. The related problem of determining the rate region of the quadratic Gaussian two-encoder source coding problem was solved by Wagner, Tavildar, and Viswanath \cite{WagnerTV2008}. Chen, et al. determined bounds on the rate region for the CEO problem with general source distributions \cite{ChenZBW2004}. Eswaran and Gastpar considered the CEO problem where the source is non-Gaussian but observations are still made through additive white Gaussian noise (AWGN) channel \cite{GastparE2005}. 

The motivating example of belief sharing in organizations is also closely connected to studies of communicating probability values \cite{Hildreth1963,KramerS2007,VarshneyV2008,RhimVG2012,VarshneyV2014}. Contrary to typical work in distributed source coding for inference \cite{HanA1998} that is concerned with compressing agents' measurements, \cite{VarshneyV2008} studied optimal quantization of prior probabilities for Bayesian hypothesis testing.  This was extended to the case of collaborative decision making in parallel fusion settings \cite{RhimVG2012}, very much like the CEO problem herein but in non-asymptotic regimes. Such a problem arises in several statistical signal processing, economics, and political science settings such as human affairs, where juries or committees need to possess a common preference for two alternatives.

In sensor network settings where sensors see a phenomenon through Gaussian noise but produce censored data due to hardware limitations of the measuring device, observations might follow truncated Gaussian with bounded support. Censored sensor data renders celebrated information-theoretic results for Gaussian observations invalid.

Motivated by such applications, we consider the CEO problem with non-regular source distributions (in the sense of Bayesian estimation theory \cite[p.~72]{VanTrees1968}), and observations governed by a given conditional probability density function, for example, through copula models.  More precisely we consider an i.i.d.\ source sequence $X(t)$, which follows a probability density function $f_X(x)$ with finite support $\mathcal{X}$, such that
\[
\frac{\partial f_X(x)}{\partial x} \mbox{ or } \frac{\partial^2 f_X(x)}{\partial x^2}
\]
either does not exist or is not absolutely integrable. We determine the asymptotic behavior of quadratic distortion as a function of sum rate in the limit of large numbers of agents and rate.

As commented by Viswanathan and Berger \cite{ViswanathanB1997}, results for discrete and continuous alphabets are not very different in most problems of information theory. However, for the CEO problem, the average distortion decays at an exponential rate for the discrete case and decays as $1/R$ for the Gaussian case, where $R$ is the sum rate. We derive an intermediate $1/R^2$ decay rate behavior when the regularity conditions required for the Bayesian Cram\'{e}r-Rao lower bound used in \cite{ViswanathanB1997} do not hold.  That is, we study the behavior of 
$$\beta\triangleq\lim_{L,R\to\infty}R^2D(R,L)$$ where $L$ is the total number of agents, $R$ is the sum rate, and $D(R,L)$ is the minimum achievable quadratic distortion for a fixed number of total agents $L$ and sum rate $R$. Achievability is proved through a layered scheme that follows quantization, entropy coding, and midrange estimation at the CEO. The converse is proved by lower-bounding the distortion using the extended Bayesian Chazan-Zakai-Ziv bound \cite{ChazanZZ1975,VanTreesB2007}. This result holds for the uniform distribution as a special case, implying that the CEO problem for sharing beliefs or probability values, attains $1/R^2$ convergence. 

The remainder of the paper is organized as follows: Sec.~\ref{sec:prob} provides the mathematical problem formulation and the main result of the paper. Sec.~\ref{sec:direct} proves the direct part of the coding theorem and Sec.~\ref{sec:converse} proves the converse part, using a version of the extended Chazan-Zakai-Ziv bound. Sec.~\ref{sec:disc} present some discussion on extensions and implications of results. 

\section{Problem Description and Main Result}
\label{sec:prob}
Consider an i.i.d.\ source sequence of interest $\{X(t)\}_{t=1}^\infty$ drawn from a non-regular probability density function $f_X(x)$ with finite support $\mathcal{X}$. Without loss of generality, let the source be supported on $[0,1]$.\footnote{Note that an extension to a general finite support is straightforward.} Several agents ($L$) make imperfect conditionally independent assessments of $\{X(t)\}_{t=1}^\infty$, to obtain noisy versions $\{Y_i(t)\}_{t=1}^\infty$ for $i=1,\ldots,L$. The relationship between $X(t)$ and $Y_i(t)$ is governed by a conditionally independent probability density function $W_\alpha(y_i|x)$ for all agents, where $\alpha$ is the coupling parameter. This coupling parameter represents the strength of the dependence between the source $X(t)$ and the observations $Y_i(t)$. The agents separately compress their observations. The CEO is interested in estimating $X(t)$ such that the mean squared error (MSE) between the $n$-block source $X^n=[X(1),\ldots,X(n)]$ and its estimate $\hat{X}^n=[\hat{X}(1),\ldots,\hat{X}(n)]$ is minimized.

Let the source code $\mathcal{C}_i^n$ of rate $R_i^n=(1/n)\log{|\mathcal{C}_i^n|}$ represent the coding scheme used by agent $i$ to encode a block of length $n$ of observed data $\{y_i(t)\}_{t=1}^\infty$. The CEO's estimate is given as $\hat{X}^n=\phi_L^n(C_1^n,\ldots,C_L^n)$ where $\phi_L^n:C_1^n\times\cdots\times C_L^n$ is the CEO's mapping. A specific achievability scheme for an example system, Fig.~\ref{fig:CEO} shown in the sequel, illustrates the basic system structure.

We are interested in the tradeoff between the sum rate $R=\sum_{i=1}^LR_i^n$ and the MSE at the CEO, $D^n(X^n,\hat{X}^n)$, defined as: 
\begin{equation}
D^n(X^n,\hat{X}^n)\triangleq\frac{1}{n}\sum_{t=1}^n(X(t)-\hat{X}(t)))^2\mbox{.}
\end{equation}

For a fixed set of codes, the MSE corresponding to the best estimator at the CEO is given by:
$$
D^n(C_1^n,\ldots,C_L^n)\triangleq\min_{\phi_L^n}D^n(X^n,\phi_L^n(C_1^n,\ldots,C_L^n))\mbox{.}
$$
Also, define the following quantities:
\begin{equation}
D^n(L,R)\triangleq\min_{\{C_i^n\}:\sum_{i=1}^LR_i^n\leq R}D^n(C_1^n,\ldots,C_L^n)\mbox{,}
\end{equation}
\begin{equation}
D(L,R)\triangleq\lim_{n\to\infty}D^n(L,R)\mbox{,}
\end{equation}
and
\begin{equation}
D(R)\triangleq\lim_{L\to\infty}D(L,R)\mbox{.}
\end{equation}
To understand the tradeoff between sum rate and distortion, we study the following quantity:
\[
\beta(\alpha) \triangleq \lim_{R\to\infty}R^2D(R)\mbox{.}
\]

Let $X$ be the generic random variable representing the source and $Y_i$ represent the generic random variable representing agent $i$'s observation where $X$ and $Y_i$ are related through the conditional pdf $W_\alpha(y_i|x)$. We focus on observation channels which satisfy the following property, when placed in sequence with a forward test channel with an output auxiliary random variable $U$. 
\begin{property}
\label{prop}
For a given observation channel $W_\alpha(y|x)$ between $X$ and $Y$, there exists a random variable $U$ such that: 
$X$, $Y$, and $U$ form a Markov chain, $X\to Y \to U$, and
the conditional distribution of $U$ given $X$, $f_{U|X}(u|x)$, has bounded support: $u\in[a(x),b(x)]$, where $(a+b)(x)$ is invertible and the inverse function $l(\cdot):=(a+b)^{-1}(\cdot)$ is Lipschitz continuous with Lipschitz constant $K>0$, and further does not vanish at its end points: $\lim_{u\to a(x)\text{ or }b(x)} f_{U|X}(u|x)>0$.
\end{property}

Let the set $\mathcal{S}(W)$ denote the set of random variables $U$ which satisfy the above property for a given observation channel $W$. Explicit examples of channels satisfying this property are provided later in Sec.~\ref{sec:examples}.

We now state the main result of the paper; proofs are developed in the sequel.
\begin{theorem}
When conditional density $W_\alpha$ satisfies Property~\ref{prop}, the following relations hold:
\begin{equation}
\beta(\alpha)\leq\frac{2K^2}{\delta^2}\left(\min_{U\in\mathcal{S}(W)} I(Y;U|X)\right)^2
\end{equation}
and
\begin{equation}
\beta(\alpha)\geq\left(\min_{U: X\to Y\to U}I(Y;U|X)\right)^2\int_{h=0}^\infty h\int_{\theta=0}^1f_X(\theta)e^{-hg(\theta)}d\theta dh
\end{equation}
where $K,\delta>0$ are constants and
\begin{equation}
\label{eq:g}
g(\theta)\triangleq\left\{\frac{d}{d\Delta}-\left[\min_s\log\left(\int W_{\alpha}^s(y|\theta)W_\alpha^{1-s}(y|\theta+\Delta)dy\right)\right]\right\}_{\Delta=0}
\end{equation}
is the first derivative of Chernoff information between the conditional densities $W_\alpha$ of the observation given $x=\theta$ and $x=\theta+\Delta$, evaluated at $\Delta=0$. The minimums are taken over all non-trivial random variables to ensure that the conditional mutual information is non-zero.
\end{theorem}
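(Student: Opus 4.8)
The plan is to prove the two bounds separately, since they have entirely different character: the upper bound is an achievability statement built on an explicit layered coding scheme, and the lower bound is a converse argument using the Chazan--Zakai--Ziv (CZZ) bound.

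\textbf{Achievability (upper bound on $\beta(\alpha)$).}
First I would fix a random variable $U \in \mathcal{S}(W)$ achieving (or approaching) $\min_{U\in\mathcal{S}(W)} I(Y;U|X)$, so that $X\to Y\to U$ is Markov and $f_{U|X}(\cdot|x)$ has bounded support $[a(x),b(x)]$ with the Lipschitz and non-vanishing-endpoint conditions. Each agent generates, from its observation $Y_i$, an auxiliary sample $U_i$ according to the test channel $f_{U|Y}$; this is the standard CEO ``quantize-then-bin'' setup. The agent scalar-quantizes $U_i$ finely and uses distributed (Slepian--Wolf-type) entropy coding so that, with $L$ agents and per-agent rate on the order of $R/L$, the CEO can reconstruct all $L$ auxiliary samples $U_1,\dots,U_L$ essentially losslessly; the rate needed per agent to convey $U_i$ given the correlation structure is governed by $I(Y;U|X)$, so the total rate scales like $L\cdot I(Y;U|X)$ up to the quantization overhead. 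Given a source symbol $X(t)=\theta$, the $L$ samples $U_1,\dots,U_L$ are i.i.d.\ with density $f_{U|X}(\cdot|\theta)$ supported on $[a(\theta),b(\theta)]$. The CEO then forms the \emph{midrange estimator} $\hat M = \tfrac12(\min_i U_i + \max_i U_i)$, and estimates $\hat X = l(2\hat M) = (a+b)^{-1}(2\hat M)$. Because the density is bounded away from zero at both endpoints, the extreme order statistics $\min_i U_i$ and $\max_i U_i$ converge to $a(\theta)$ and $b(\theta)$ at rate $1/L$ (an exponential/gamma-type fluctuation), so $\mathbb{E}[(2\hat M - (a+b)(\theta))^2] = \Theta(1/L^2)$; the Lipschitz constant $K$ then transfers this to $\mathbb{E}[(\hat X - \theta)^2] \le K^2 \cdot \Theta(1/L^2)$. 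The remaining bookkeeping is to (i) write the per-agent rate as $R/L = I(Y;U|X) + \epsilon$ for vanishing overhead $\epsilon$, so $L = R/(I(Y;U|X)+\epsilon)$, and (ii) substitute to get $D(R,L) \lesssim K^2/L^2 = K^2 (I(Y;U|X)+\epsilon)^2 / R^2$, from which $R^2 D(R) \le K^2 (\min_U I(Y;U|X))^2$ times a constant; the stated constant $2/\delta^2$ should emerge from the precise second moment of the midrange of i.i.d.\ variables with density $\ge \delta$ near the endpoints. I would make these order-statistic moment computations precise, then optimize over $U\in\mathcal{S}(W)$.

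\textbf{Converse (lower bound on $\beta(\alpha)$).}
Here I would start from the data-processing structure: any coding scheme induces, at the CEO, effective ``observations'' whose total informativeness about $X$ is limited by the sum rate, and the best MSE is lower bounded by a Bayesian bound that does \emph{not} require regularity --- the extended Chazan--Zakai--Ziv bound. The CZZ bound expresses $D(R) \ge \int_0^\infty h\, \mathcal{V}(h)\,dh$ where $\mathcal{V}(h)$ is (roughly) a valley-filled version of $\int f_X(\theta)\,P_{\min}(\theta,\theta+h)\,d\theta$ and $P_{\min}$ is the minimum probability of error in a binary hypothesis test between $X=\theta$ and $X=\theta+h$ based on what the CEO receives. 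The key is that with $L$ conditionally independent agents each at rate $\sim R/L$, the binary error exponent for distinguishing $\theta$ from $\theta+h$ from all the coded data behaves like $\exp(-L \cdot c \cdot (\text{Chernoff-type quantity}))$, and as $L\to\infty$ with $R$ fixed we must take $h$ to scale like $1/R$ for a non-degenerate limit; a Taylor expansion of the Chernoff information in the separation $\Delta$ gives the first-derivative term $g(\theta)$ of \eqref{eq:g}, and the exponent becomes $\sim h\,g(\theta)\cdot(\text{rate factor})$. After the change of variables absorbing the $R$ scaling, $P_{\min}(\theta,\theta+h) \to e^{-h g(\theta)}$ appropriately weighted, and one obtains $\lim R^2 D(R) \ge (\min_U I(Y;U|X))^2 \int_0^\infty h \int_0^1 f_X(\theta) e^{-h g(\theta)}\,d\theta\,dh$. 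I would need to be careful that the CZZ ``valley-filling'' operation and the interchange of the $h$-integral with the $L,R\to\infty$ limits are justified (dominated convergence, plus monotonicity of the valley-filled function), and that the error exponent is correctly tied to the per-agent mutual information $I(Y;U|X)$ through the optimal distributed test channel.

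\textbf{Main obstacle.}
I expect the hardest part to be the converse: correctly identifying how the sum-rate constraint translates into an error exponent for the binary hypothesis test inside the CZZ bound, and doing the joint $L,R\to\infty$ limit with the $1/R$ scaling of the hypothesis separation $h$ so that the Chernoff information linearizes to exactly $g(\theta)$ and the mutual-information prefactor $(\min_U I(Y;U|X))^2$ appears. On the achievability side, the analogous delicate point is pinning down the exact constant in the second moment of the midrange so that $2K^2/\delta^2$ (rather than some looser constant) is what comes out, and ensuring the distributed entropy-coding overhead is $o(1)$ per agent uniformly as $L\to\infty$.
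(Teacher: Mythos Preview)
Your achievability sketch matches the paper closely: the layered scheme with test-channel quantization, Slepian--Wolf binning at per-agent rate $\approx I(Y;U|X)$, and midrange estimation of $X$ from the recovered $U_1,\dots,U_L$ is exactly what is used, and the $1/L^2$ MSE scaling of the midrange (with the constant governed by the endpoint density lower bound $\delta$) is the source of the $2K^2/\delta^2$ constant.

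Your converse, however, has a structural gap. You propose to feed the sum-rate constraint \emph{into} the CZZ bound by analyzing the binary error exponent ``from all the coded data''; but that exponent depends on the specific codes, and tying it to $I(Y;U|X)$ directly is not straightforward. The paper avoids this entirely by \emph{decoupling} the two bounds: (i) the distortion is lower-bounded by CZZ applied to a genie-aided CEO who sees the \emph{raw} observations $Y_1(t),\dots,Y_L(t)$ (and all other source letters $\breve{X}_t$), so $P_{\min,t}(\theta,\theta+h)$ depends only on $L$ and $W_\alpha$, not on the codes; (ii) the rate is lower-bounded separately via the chain rule and the Markov relation $X(t)\to Y_i(t)\to C_i$, giving $R\ge \frac{1}{n}\sum_t\sum_i I(Y_i(t);U_i(t,\breve X_t)\mid X(t))$ for suitable auxiliary $U_i$. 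The product $R^2 D$ is then bounded by combining these two inequalities through an arithmetic--harmonic mean step and an elementary ratio inequality of the form $\sum_i p_i A_i/\sum_i p_i B_i\ge \min_i A_i/B_i$; this is where the prefactor $(\min_U I(Y;U|X))^2$ falls out, with no need to analyze coded-data error exponents.

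Two smaller corrections: the change of variables inside CZZ is $\tilde h = hL$ (so $h$ scales like $1/L$, not $1/R$; since $R\asymp L\,I(Y;U|X)$ these coincide in the limit but the argument is cleanest in $L$), and the version of CZZ used here does not involve the valley-filling operation --- it is the plain Chazan--Zakai--Ziv form with $P_{\min}$ replaced by its Chernoff--Stein asymptotic $e^{-L\,\mathtt C(\theta,\theta+\tilde h/L)}$, after which the Taylor expansion of Chernoff information in $\Delta=\tilde h/L$ produces $g(\theta)$.
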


Notice from the theorem, since $\beta(\alpha)$ is a finite constant, it implies that for a non-regular source distribution, in the limit of large sum rate, the distortion decays as $1/R^2$. This serves as an intermediate regime between the exponential decay of the discrete case \cite{BergerZV1996} and $1/R$ decay of the quadratic Gaussian case\cite{ViswanathanB1997}. This result can be summarized as the fact that sharing beliefs (uniform) is fundamentally easier than sharing measurements (Gaussian), but sharing decisions is even easier (discrete). This shows the effect of the underlying source distribution on the asymptotic estimation performance. When the source has countably finite support set (discrete), we observe an exponential decay. On the other extreme, when the source has an unbounded support set (Gaussian), we observe a $1/R$ decay. Our result is for the source with bounded support (uniform, for example), and we get an intermediate result of $1/R^2$ decay. This suggests the intuitive observation that as the number of possibilities for the source (support) increases, it gets more difficult to communicate the values.

One can also note the similarity in structure of the lower bound of $\beta(\alpha)$ in this problem with other CEO problems \cite{BergerZV1996,ViswanathanB1997}. Most notably, in all cases, there is a minimization of conditional mutual information. Also, the bound here depends on Chernoff information which serves as a divergence metric similar to the Kullback-Leibler divergence from the discrete case \cite{BergerZV1996} and as an information metric similar to Fisher information from the quadratic Gaussian case \cite{ViswanathanB1997}.

\subsection{Examples of Observation Channels Satisfying Property~\ref{prop}}
\label{sec:examples}
Property~\ref{prop} may seem a little opaque, so here we give an illustrative example of a family of observation channels that satisfy it.

\begin{proposition}
A sufficient condition for an observation channel to satisfy Property~\ref{prop} is when its density $W_\alpha(y_i|x)$ is given by a copula conditional density function\footnote{A copula is a multivariate probability distribution for which the marginal probability distribution of each variable is uniform\cite{Nelsen2006}.}  and has discontinuity at end points.
\end{proposition}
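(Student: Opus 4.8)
The plan is to prove the proposition by exhibiting the auxiliary random variable $U$ demanded by Property~\ref{prop}, the obvious candidate being $U=Y$ itself, so that the Markov chain $X\to Y\to U$ is trivial and $f_{U|X}(u|x)=W_\alpha(u|x)$. The work then reduces to checking the three analytic requirements on $f_{U|X}$. First I would record that, because $W_\alpha(y|x)=c_\alpha(x,y)$ is a copula conditional density, for each fixed $x$ it is a probability density in $y$ supported on a closed subinterval $[a(x),b(x)]\subseteq[0,1]$, so the bounded-support requirement is immediate. Second, the hypothesis that $c_\alpha$ is discontinuous at its endpoints is precisely the statement that $W_\alpha(\cdot|x)$ jumps from $0$ to a strictly positive value at $a(x)$ and $b(x)$; hence $\lim_{u\to a(x)^{+}}f_{U|X}(u|x)>0$ and $\lim_{u\to b(x)^{-}}f_{U|X}(u|x)>0$, and a compactness argument in $x\in[0,1]$, using continuity of $c_\alpha$ on the interior of the unit square, promotes these into a uniform lower bound $\delta>0$ — the constant $\delta$ appearing in the theorem.

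The remaining requirement — that $x\mapsto a(x)+b(x)$ be invertible with a $K$-Lipschitz inverse $l=(a+b)^{-1}$ — is the heart of the matter and where I expect the main obstacle. Here I would use the defining uniform-marginal constraint of a copula, $\int_{0}^{1}c_\alpha(x,y)\,dx=1$ for almost every $y$, together with the conditional normalization $\int_{a(x)}^{b(x)}c_\alpha(x,y)\,dy=1$, to show that the family of supports $\{[a(x),b(x)]\}_{x\in[0,1]}$ sweeps $[0,1]$ monotonically, so that $a(\cdot)$ and $b(\cdot)$ are nondecreasing and $a+b$ is nondecreasing. Differentiating the conditional normalization via the Leibniz rule and invoking the endpoint positivity $c_\alpha(x,a(x)),c_\alpha(x,b(x))\ge\delta>0$, together with the structural constraints the copula imposes on $\partial_x c_\alpha$, would then bound the increments of $a+b$ both below and above, i.e.\ $a+b$ is bi-Lipschitz; strict monotonicity yields invertibility, and the Lipschitz constant of $l$ is the reciprocal of the lower bi-Lipschitz constant of $a+b$, which is the $K$ of the theorem. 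The coupling parameter $\alpha$ enters only through $\delta$ and $K$.

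The subtle point — and the reason the endpoint-discontinuity hypothesis is essential rather than cosmetic — is that a copula conditional density may be supported on all of $[0,1]$ for every $x$ (as happens, e.g., for the Farlie-Gumbel-Morgenstern family), in which case $a+b$ is constant and Property~\ref{prop} cannot hold for any choice of $U$, since under $X\to Y\to U$ the support of $U$ given $X=x$ is then $x$-independent. Thus the hypothesis "discontinuity at the endpoints" must be read as entailing that the conditional density is carried on a proper, $x$-dependent subinterval, and the proof must extract from it enough quantitative structure to force that support to move with $x$ monotonically and at a controlled rate; establishing this bi-Lipschitz sweeping property is the step I would expect to require the most care, after which it should be routine to conclude and then to instantiate a concrete copula family in Sec.~\ref{sec:examples} confirming that the hypotheses are non-vacuous.
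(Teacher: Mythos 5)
Your construction diverges from the paper's at the very first step: the paper does not take $U=Y$. It builds a genuine test channel $U=Y+N$, where $N$ is a discrete $k$-peak noise, $f_N(n)=\sum_{l=1}^k p_l\delta(n-n_l)$ with $k\geq 2$, so that $f_{U|X}(u|x)=\sum_{l=1}^k p_l W_\alpha(u-n_l|x)$; bounded support on $[e_l(x)+n_1,\,e_u(x)+n_k]$ and strictly positive endpoint limits then follow immediately, the endpoint values being lower-bounded by $p_1$ (resp.\ $p_k$) times the endpoint limits of $W_\alpha$. The reason the identity channel is a poor choice is what Property~\ref{prop} is used for downstream: $U$ parametrizes the forward test channel in the achievability scheme, where the per-agent rate is bounded by $I(Y;U|X)+\epsilon$. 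With $U=Y$ continuous, $I(Y;U|X)$ is infinite, so your $U$ is a vacuous member of $\mathcal{S}(W)$; the $k$-peak additive noise (which the authors explicitly liken to Fix's random quantizer for a uniform source) is the substantive content of the proposition, not an afterthought.

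The larger problem is the invertibility/Lipschitz clause. Your plan to derive monotonicity and a bi-Lipschitz bound on $a+b$ from the uniform-marginal identity $\int_0^1 c_\alpha(x,y)\,dx=1$ together with Leibniz differentiation of the conditional normalization is a sketch, not a proof, and I do not believe it can be completed at this level of generality: the copula marginal constraint is a global integral condition and does not force the conditional support endpoints to vary monotonically in $x$, let alone with increments bounded away from zero and infinity. Your own Farlie--Gumbel--Morgenstern observation already shows that the hypotheses as literally stated do not even imply that $a+b$ is nonconstant, so the "sweeping" property you need is an additional assumption, not a consequence. You correctly identify this as the heart of the matter but then leave it unestablished, which makes the proof incomplete. (For what it is worth, the paper's own proof verifies only the bounded-support and nonvanishing-endpoint clauses and is silent on invertibility of $a+b$ and Lipschitz continuity of its inverse --- and shifting by $x$-independent constants $n_l$ cannot create $x$-dependence of the support that was not already present in $W_\alpha$ --- so this clause is a gap in both arguments; but your write-up purports to close it and does not, whereas the honest resolution is to flag it as an unproven extra hypothesis, verified directly in concrete examples such as the Clayton copula.)
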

\begin{proof}
Let the end points of observation channel $W_\alpha(y_i|x)$ be denoted by $e_l(x)$ and $e_u(x)$. Due to its discontinuity at the end points, we have the following
\begin{equation}
\lim_{y_i\to e_l(x)\text{ or }e_u(x)} W_{\alpha}(y_i|x)>0.
\end{equation} 
Consider the test channel given by $U_i=Y_i+N$, where $N$ is a $k$-peak noise for $k\geq2$, $f_N(n)=\sum_{l=1}^k p_l\delta(n-n_l)$, where $\delta(\cdot)$ is the Dirac delta function, $\sum_{l=1}^kp_l=1$, and $n_1<n_2<\cdots<n_k$. Then the conditional distribution of $U$ given $X$, $f_{U|X}(u_i|x)=\sum_{l=1}^kp_lW_{\alpha}(u_i-n_l|x)$, has bounded support: $[e_l(x)+n_1,e_u(x)+n_k]$ and the values of $f_{U|X}(u_i|x)$ at the end points are given by 
\begin{equation}
\lim_{u_i\to e_l(x)+n_1}f_{U|X}(u_i|x)\geq p_1\lim_{y_i\to e_l(x)}W_{\alpha}(y_i|x)>0
\end{equation}
and
\begin{equation}
\lim_{u_i\to e_u(x)+n_k}f_{U|X}(u_i|x)\geq p_k\lim_{y_i\to e_u(x)}W_{\alpha}(y_i|x)>0.
\end{equation}
This proves the proposition.
\end{proof}
We now provide a specific example from the above family of observation channels and explicitly show that it satisfies Property~\ref{prop}.

\begin{example}
As a specific example, consider the case when the source $X(t)$ and the observations $Y_i(t)$ are marginally distributed with uniform distribution in $(0,1)$ and Clayton copula model is used to model the channel between the source and the observations. The conditional distribution of $Y_i$ given $X$ is given by the following (for $1/2<\alpha<1$)
\begin{equation}
\label{eq:cop}
W_\alpha(y_i|x)=
\begin{cases}
(1-\alpha)(xy_i)^{\alpha-1}\left(x^\alpha+y_i^\alpha-1\right)^{1/\alpha-2}, &\text{for $(1-x^\alpha)^{1/\alpha}\leq y_i\leq1$}\\
0, & \text{otherwise.}
\end{cases}
\end{equation}

Using the test channel defined as $U_i=Y_i+N$, where $N$ is a $k$-peak noise for $k\geq2$, $f_N(n)=\sum_{l=1}^k p_l\delta(n-n_l)$, where $\delta(\cdot)$ is the Dirac delta function and $\sum_{l=1}^kp_l=1$, would result in a density $f_{U|X}(u_i|x)$ given by (without loss of generality, assume $n_1<n_2<\cdots<n_k$):
\begin{eqnarray}
\label{eq:fux}
f_{U|X}(u_i|x)=
\begin{cases}
p_1(1-\alpha)(x(u_i-n_1))^{\alpha-1}\left(x^\alpha+(u_i-n_1)^\alpha-1\right)^{1/\alpha-2}, &\text{for $(1-x^\alpha)^{1/\alpha}+n_1\leq u_i\leq1+n_1$}\\
p_2(1-\alpha)(x(u_i-n_2))^{\alpha-1}\left(x^\alpha+(u_i-n_2)^\alpha-1\right)^{1/\alpha-2}, &\text{for $(1-x^\alpha)^{1/\alpha}+n_2\leq u_i\leq1+n_2$}\\
\qquad\qquad\vdots\\
p_k(1-\alpha)(x(u_i-n_k))^{\alpha-1}\left(x^\alpha+(u_i-n_k)^\alpha-1\right)^{1/\alpha-2}, &\text{for $(1-x^\alpha)^{1/\alpha}+n_k\leq u_i\leq1+n_k$}\\
0, & \text{otherwise.}
\end{cases}
\end{eqnarray}
when $1+n_l<(1-x^\alpha)^{1/\alpha}+n_{l+1}$ for $l=1,\ldots,k-1$ to ensure the shifted versions of $W_\alpha(y_i|x)$ do not overlap.\footnote{It is straightforward to prove that the property holds even when there is overlap among the shifted versions.} 

We now show that $f_{U|X}(u_i|x)$ given by \eqref{eq:fux} satisfies Property~\ref{prop}, which basically consists of two conditions on $f_{U|X}(u|x)$: bounded support and non-vanishing end points. $f_{U|X}(u_i|x)$ given in \eqref{eq:fux} has bounded support as $(1-x^\alpha)^{1/\alpha}+n_1\leq u_i\leq1+n_k$ (irrespective of whether the shifted versions overlap or not). Also, its values at the end points are given by:
\begin{equation}
\lim_{u_i\to(1-x^\alpha)^{1/\alpha}+n_1}f_{U|X}(u_i|x)=p_1(1-\alpha)x^{\alpha-1}(1-x^\alpha)^{1-1/\alpha}(0)^{1/\alpha-2}
\to \infty >0
\end{equation}
as $\alpha>1/2$ and
\begin{equation}
\lim_{u_i\to1+n_k}f_{U|X}(u_i|x)=p_k(1-\alpha)x^{-\alpha}>0
\end{equation}
Hence, it satisfies Property~\ref{prop}.
\end{example}

Note the similarity between the form of this test channel and the random quantizer used by Fix for achieving the rate-distortion function of a uniform source \cite{Fix1978}.

As can be seen later in Sec.~\ref{sec:direct}, the achievability involves use of a test channel, Slepian-Wolf encoding and decoding, and midrange estimation at the CEO. Fig.~\ref{fig:CEO} provides a block diagram outlining these steps for the example considered here.

\begin{figure}[htbp]
\begin{center}
\includegraphics[height=3.25in,width=!]{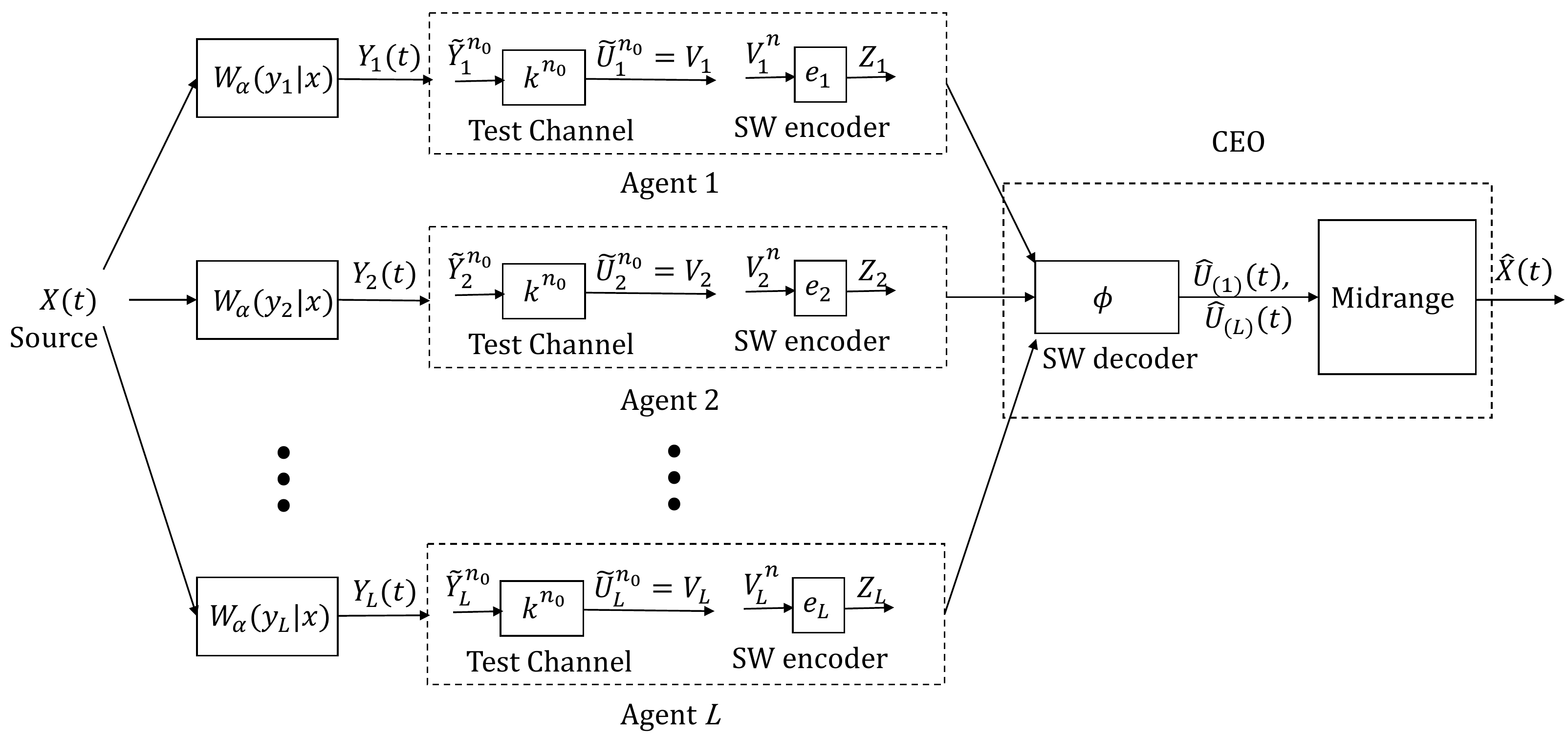}
\caption{A block diagram of system model for the Example~1 using the Clayton copula based observation channel $W_\alpha(y_i|x)$  given by \eqref{eq:cop} and test channel $U=Y+N$ where $N$ is a $k$-peak noise. Here $k^{n_0}$ represents the block code that approximates the test channel.}
\label{fig:CEO}
\end{center}
\end{figure}

\section{Direct Coding Theorem}
\label{sec:direct}
The structure of the achievable scheme is a layered architecture, with scalar quantization followed by Slepian-Wolf entropy coding, just like for the Gaussian CEO problem \cite{ViswanathanB1997} and other source coding problems \cite{ZamirB1999,WagnerTV2008,Servetto2005}. The following are key steps of the analysis: quantization of alphabets, codes that approximate the forward test channel, Slepian-Wolf encoding and decoding, and estimation at the CEO.

Every agent uses a two-stage encoding scheme. In the first stage, a block of observations are mapped to codewords from a codebook which is identical for all agents. The second stage is an index encoder that performs Slepian-Wolf encoding of the codewords \cite{SlepianW1973,Cover1975}. For decoding, the CEO first performs index decoding to determine the $L$ codewords corresponding to each of the agents, and then estimates the source value at each instant based on a midrange estimator \cite{NeymanP1928,ArceF1988}.

The key aspect of the proof is a choice of forward test channel which is characterized by an auxiliary random variable $U$. We choose the test channel from $Y$ to $U$, denoted by $Q(u|y)$ where $U\in\mathcal{S}(W)$, so as to induce a distribution $f_{U|X}(u|x)$ which satisfies Property~\ref{prop}.

\subsection{Quantization of alphabets}
To design the coding scheme, we start by quantizing the continuous alphabets. Denote by $\tilde{X}$, $\tilde{Y}$, and $\tilde{U}$ the quantized versions of random variables $X$, $Y$, and $U$, respectively. Their corresponding alphabets are denoted by $\mathcal{\tilde{X}}$, $\mathcal{\tilde{Y}}$, and $\mathcal{\tilde{U}}$ respectively. Conditions to be satisfied by the quantization are as follows:
\begin{eqnarray}
E(U-\tilde{U})^2\leq{\delta_0}\label{eq:d1}\\
|I(Y;U)-I(\tilde{Y},\tilde{U})|\leq\delta_1\label{eq:d2}\\
|I(X;U)-I(\tilde{X},\tilde{U})|\leq\delta_2\label{eq:d3},
\end{eqnarray}
where $\delta_j>0$, for $j=0,1,2$.
There exists quantization schemes that achieve each of these above constraints individually: \eqref{eq:d1} from that fact that $EU^2<\infty$, and \eqref{eq:d2} and \eqref{eq:d3} from the definition of mutual information for arbitrary ensembles \cite{Dobrushin1959}. Therefore, a common refinement of the quantization schemes that achieve \eqref{eq:d1}--\eqref{eq:d3} separately will satisfy them simultaneously. This quantization induces a corresponding joint probability distribution for the quantized versions $\tilde{X}$, $\tilde{Y}$, and $\tilde{U}$:
\begin{eqnarray*}
P_{\tilde{Y},\tilde{U}}(\tilde{y},\tilde{u})&=&\int_{\{(y,u):\text{quant}(y)=\tilde{y},\text{quant}(u)=\tilde{u}\}}f_{Y,U}(y,u)dydu\\
P_{\tilde{X},\tilde{U}}(\tilde{x},\tilde{u})&=&\int_{\{(x,u):\text{quant}(x)=\tilde{x},\text{quant}(u)=\tilde{u}\}}f_{X}(x)f_{U|X}(u|x)dxdu\nonumber\\
\tilde{W}_\alpha(\tilde{y}|x)&=&\int_{\{y:\text{quant}(y)=\tilde{y}\}}W_\alpha(y|x)dy\\
Q(\tilde{u}|\tilde{y})&=&\frac{P_{\tilde{Y},\tilde{U}}(\tilde{y},\tilde{u})}{P_{\tilde{Y}}(\tilde{y})}.
\end{eqnarray*}
Note that any letters $\tilde{y}$ of zero measure are removed from $\mathcal{\tilde{Y}}$.

\subsection{Codes that approximate the test channel}
The encoding scheme works on the quantized version $\tilde{Y}_i$. The basic idea is to build a block code between the quantized versions $\tilde{Y}_i$ and $\tilde{U}_i$, and show that the designed block code approximates the test channel $Q(u|y)$ that arises from satisfying Property~\ref{prop}. Let $k^{n_0}$ be a block code of length $n_0$ from $\mathcal{\tilde{Y}}^{n_0}$ to $\mathcal{\tilde{U}}^{n_0}$. This map, $k^{n_0}$, induces the following joint distribution between the blocks ${\tilde{Y}}^{n_0}=[\tilde{Y}(1),\ldots,\tilde{Y}(n_0)]$ and ${\tilde{U}}^{n_0}=[\tilde{U}(1),\ldots,\tilde{U}(n_0)]$:
$$\hat{P}^{n_0}({\tilde{Y}}^{n_0}={\tilde{y}}^{n_0},{\tilde{U}}^{n_0}={\tilde{u}}^{n_0})=P_{{\tilde{Y}}^{n_0}}({\tilde{y}}^{n_0})\mathbbm{1}_{\{k^{n_0}({\tilde{y}}^{n_0})={\tilde{u}}^{n_0}\}},$$
where $\mathbbm{1}_{\mathcal{A}}$ is the indicator function which is 1 when the event $\mathcal{A}$ is true and 0 otherwise. Also, the corresponding marginals and conditionals are given by
\[
\hat{P}({\tilde{Y}}(t)=\tilde{y},{\tilde{U}}(t)=\tilde{u})=E_{P_{\tilde{Y}^n}}\mathbbm{1}_{\left\{\tilde{U}(t)=\tilde{u},\tilde{Y}(t)=\tilde{y}\right\}}
\]
\[
\hat{Q}(\tilde{U}(t)=\tilde{u}|\tilde{Y}(t)=\tilde{y})=\frac{\hat{P}(\tilde{Y}(t)=\tilde{y},\tilde{U}(t)=\tilde{u})}{P_{\tilde{Y}}(\tilde{Y}(t)=\tilde{y})}
\mbox{.}
\]

Now the existence of a block code $k^{n_0}:\mathcal{\tilde{Y}}^{n_0}\to\mathcal{\tilde{U}}^{n_0}$ which approximates a test channel $Q(u|y)$ arising from Property~\ref{prop} follows from \cite[Proposition~3.1]{ViswanathanB1997}, which is stated here without proof.
\begin{proposition}[\cite{ViswanathanB1997}]
\label{prop:map}
For every $\epsilon_0,\delta_3>0$, there exists a deterministic map $k^{n_0}:\mathcal{\tilde{{Y}}}^{n_0}\to\tilde{\mathcal{U}}^{n_0}$ with the range cardinality $M$ such that 
\begin{equation}
\label{eq:MtoI}
\frac{1}{n_0}\log{M}\leq I(Y;U)+\delta_3
\end{equation}
and 
$$\sum_{\tilde{u}\in\mathcal{\tilde{U}}}|\hat{Q}(\tilde{U}(t)=\tilde{u}|x)-{Q}(\tilde{U}(t)=\tilde{u}|x)|\leq\frac{\epsilon_0}{|\mathcal{\tilde{X}}|}$$
for all $t=1,\ldots,n_0$ and all real $x$.
\end{proposition}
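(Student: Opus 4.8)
\section*{Proof proposal for Proposition~\ref{prop:map}}

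The proposition is a ``test-channel simulation'' statement: the discrete memoryless channel $Q(\tilde u|\tilde y)$ is to be reproduced, in a single-letter conditional-marginal sense, by a deterministic block map of rate essentially $I(Y;U)$. The natural route is a random-coding / covering-lemma argument, in the spirit of the achievability proofs for the rate--distortion theorem and for source coding with side information (Wyner--Ziv), modified so that the \emph{induced conditional law} matches $Q(\cdot|\tilde y)$ rather than merely being jointly typical.

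First I would fix a small $\epsilon'>0$, set $M=\lceil 2^{n_0(I(Y;U)+\delta_3)}\rceil$, and draw a random codebook $\{\tilde u^{n_0}(1),\dots,\tilde u^{n_0}(M)\}$ with codewords i.i.d.\ from the product of the quantized marginal $P_{\tilde U}$. Define $k^{n_0}(\tilde y^{n_0})=\tilde u^{n_0}(m^\star)$, with $m^\star$ the least $m$ for which $(\tilde y^{n_0},\tilde u^{n_0}(m))$ is strongly (robustly) typical for the joint law $P_{\tilde Y\tilde U}$ induced by $Q$; if there is no such $m$, declare $\tilde y^{n_0}$ \emph{uncovered} and output $\tilde u^{n_0}(1)$. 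The range cardinality is then at most $M$, which gives \eqref{eq:MtoI} immediately; everything else is to exhibit a realization of the codebook for which the total-variation bound on $\hat Q(\cdot|x)$ holds for all $t$ and all real $x$ simultaneously.

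The two analytic ingredients are a covering bound and a conditional-typicality bound. The covering bound states that, for a fixed $\tilde y^{n_0}$ whose type is close to $\tilde W_\alpha(\cdot|x)$, the probability (over the codebook) that no jointly typical codeword exists decays doubly exponentially once the rate exceeds the associated covering rate. The conditional-typicality bound states that if $(\tilde y^{n_0},k^{n_0}(\tilde y^{n_0}))$ is jointly typical, then marginalizing at any coordinate $t$ yields a conditional law that is $O(\epsilon')$-close in total variation to $Q(\cdot|\tilde y(t))$; since all coordinates are treated symmetrically in distribution, a random coordinate permutation (or a union bound over the $n_0$ values of $t$, still polynomial) handles the ``for all $t$'' quantifier. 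Conditioning on $X=x$, the block $\tilde Y^{n_0}$ is i.i.d.\ $\tilde W_\alpha(\cdot|x)$ and hence lies in its typical set -- and is therefore covered -- with probability $1-o(1)$; on that event, averaging $\mathbbm{1}\{\tilde U(t)=\tilde u\}$ over $\tilde Y^{n_0}$ shows that $\hat Q(\tilde U(t)=\tilde u|x)$ is within the prescribed slack of $\sum_{\tilde y}\tilde W_\alpha(\tilde y|x)\,Q(\tilde u|\tilde y)$, which is exactly $Q(\tilde u|x)$, i.e.\ $Q$ placed in sequence with the observation channel. One then takes $\epsilon'$ and $\delta_3$ small and $n_0$ large so that all slacks fit inside $\epsilon_0/|\tilde{\mathcal X}|$.

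The step I expect to be the real obstacle is making the approximation uniform over \emph{all} real $x$ while the rate stays pinned at $I(Y;U)+\delta_3$. A naive per-$x$ hard-covering bound would demand rate $\max_{\tilde y} D\!\big(Q(\cdot|\tilde y)\,\|\,P_{\tilde U}\big)$, which can exceed $I(Y;U)$, and union-bounding over the polynomially many joint types on $\tilde{\mathcal Y}^{n_0}\times\tilde{\mathcal U}^{n_0}$ does not by itself repair the rate. The resolution is a soft-covering (channel-resolvability) refinement: with $\approx 2^{n_0 I(Y;U)}$ codewords drawn from $P_{\tilde U}$, the \emph{output} of the (deterministic, first-jointly-typical-codeword) map is close in total variation to the target even when not every input sequence is covered, since only the single-letter output conditional given $x$ must be approximately correct. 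Reconciling the deterministic map, the single codeword-generating marginal $P_{\tilde U}$, the rate budget $I(Y;U)+\delta_3$, and the uniformity over $x$ is the delicate point; the remaining estimates -- the covering lemma, conditional typicality, and the choice of $\epsilon'$, $\delta_3$, $n_0$ -- are routine.
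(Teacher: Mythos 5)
First, a point of reference: the paper does not prove this proposition at all --- it is imported from Viswanathan and Berger and explicitly ``stated here without proof,'' so the comparison can only be with the argument in that reference. Your architecture (a codebook of $M=\lceil 2^{n_0(I(Y;U)+\delta_3)}\rceil$ codewords drawn i.i.d.\ from $P_{\tilde{U}}^{n_0}$, a first-jointly-typical-codeword deterministic encoder, and a covering-plus-conditional-typicality analysis) is indeed the standard route taken there, and \eqref{eq:MtoI} follows immediately from the choice of $M$.

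The genuine problem is the step you single out as the ``real obstacle'': it is manufactured by a misreading of the conditioning, and the soft-covering repair you gesture at is neither needed nor carried out. You assert that, given $X=x$, the whole block $\tilde{Y}^{n_0}$ is i.i.d.\ $\tilde{W}_\alpha(\cdot|x)$. It is not: the inner block spans $n_0$ distinct time instants with i.i.d.\ source letters $X(1),\dots,X(n_0)$, so conditioning on $X(t)=x$ changes the law of the single coordinate $\tilde{Y}(t)$ only, while the other $n_0-1$ coordinates retain the unconditional marginal $P_{\tilde{Y}}$. Hence $\hat{Q}(\tilde{U}(t)=\tilde{u}|x)=\sum_{\tilde{y}}\tilde{W}_\alpha(\tilde{y}|x)\,\hat{Q}(\tilde{U}(t)=\tilde{u}|\tilde{Y}(t)=\tilde{y})$, while $Q(\tilde{u}|x)=\sum_{\tilde{y}}\tilde{W}_\alpha(\tilde{y}|x)\,Q(\tilde{u}|\tilde{y})$ by the Markov chain, and the triangle inequality gives $\sum_{\tilde{u}}|\hat{Q}(\tilde{u}|x)-Q(\tilde{u}|x)|\leq\max_{\tilde{y}}\sum_{\tilde{u}}|\hat{Q}(\tilde{U}(t)=\tilde{u}|\tilde{Y}(t)=\tilde{y})-Q(\tilde{u}|\tilde{y})|$. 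The uniformity over all real $x$ therefore reduces for free to uniformity over the finitely many letters of the quantized alphabet $\tilde{\mathcal{Y}}$, with the codebook still drawn from $P_{\tilde{U}}$ and the rate pinned at $I(Y;U)+\delta_3$; there is no covering of atypical input types to worry about. What your sketch then still owes --- and leaves open at exactly the point you flag as ``delicate'' --- is the per-coordinate, per-letter approximation $\hat{Q}(\cdot\,|\,\tilde{Y}(t)=\tilde{y})\approx Q(\cdot|\tilde{y})$ for every $t$ and every $\tilde{y}$. That follows from the covering lemma (the selected codeword is jointly typical with its input, so the empirical joint type of the pair is within $\epsilon'$ of $P_{\tilde{Y}\tilde{U}}$, controlling the time-average of the per-coordinate joint laws) together with a symmetrization of the encoder over coordinates to pass from the time-average to each fixed $t$ --- the same $t$-symmetry the paper later invokes at \eqref{eq:symm}. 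As written, your proposal is incomplete at that step, and under your stated conditional model it would (incorrectly) appear to require rate strictly above $I(Y;U)$.
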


\subsection{Encoding and decoding}
The encoding is performed in two stages: in the first stage, the agents use the identical deterministic mapping $k^{n_0}$ of Proposition~\ref{prop:map} to encode their quantized observation block $\tilde{Y}_i^{n_0}$ into codewords $\tilde{U}_i^{n_0}$; and in the second stage, Slepian-Wolf encoding \cite{SlepianW1973} is used to encode the index of each agent's codeword $\tilde{U}_i^{n_0}$. Let the index of codeword $\tilde{U}_i^{n_0}$ in the codebook be denoted by $V_i$, for $i=1,\ldots,L$. We will use the index to represent the codeword due to the one-to-one correspondence between the index and the codeword, therefore, we have $V_i=\tilde{U}_i^{n_0}$. Note that $V_1,\ldots,V_L$ are correlated and Slepian-Wolf encoding of the indices is used to remove that correlation across agents. This is done by index encoding $n$-length block of indices of agent $i$, represented as $V_i^n=[V_i(1),\ldots,V_i(n)]$, where $V_i(t)$ is the $t$th component of the $n$-block of the indices of agent $i$. This block of indices is then mapped to a smaller index set using a mapping $e_i:\mathcal{\tilde{U}}^{nn_0}\to\{0,\ldots,N_i-1\}$, for $i=1,\ldots,L$, where $N_i$ and $n$ are chosen to be sufficiently large to ensure a negligible decoding error. The sum rate per source symbol is given by $$R=\frac{1}{nn_0}\sum_{i=1}^L \log N_i.$$ Therefore, we have a complete encoder $h_i=e_i\circ k^{n_0}:\tilde{\mathcal{Y}}^{nn_0}\to\{0,1,2,\ldots,N_i-1\}$, where `$\circ$' is the composition operator. Let the output of this encoder be represented by $Z_i=h_i(\tilde{Y}_i^{nn_0})\in\{0,1,\ldots,N_i-1\}$.

The CEO receives the indices $Z_1,\ldots,Z_L$ corresponding to the $L$ agents. It first recovers the block of indices $\hat{V}_i^n$, for all $i$ using a mapping $\phi:\prod_{i=1}^L\{0,1,\ldots,N_i-1\}\to\prod_{i=1}^L\mathcal{\tilde{U}}_i^{nn_0}$. The output of this decoder, represented as $\hat{V}_i^n=[\hat{V}_i(1),\ldots,\hat{V}_i(n)]=[\hat{U}_i^{n_0}(1),\ldots,\hat{U}_i^{n_0}(n)]$, is the decoded super codeword and $\hat{U}_i^{n_0}(t)$ is the decoded version of $\tilde{U}_i^{n_0}(t)$. From the Slepian-Wolf theorem (cf.~\cite[Proposition 3.2]{ViswanathanB1997}), we know there exist encoders $\{e_i\}$ and a decoder $\phi$ such that the codewords can be recovered with negligible error probability for sufficiently large block size $n$.
\begin{proposition}[\cite{ViswanathanB1997}]
\label{prop:SW}
For every $\epsilon_1,\lambda>0$, there exists sufficiently large $L,n$, and index encoders $e_1,\ldots,e_L$ and index decoder $\phi$ such that
\begin{equation}
\frac{R}{L}\leq\frac{1}{n_0}H(\tilde{U}^{n_0}|\tilde{X}^{n_0})+\epsilon_1,\label{eq:R/L}
\end{equation}
\begin{equation}
\Pr\{(\hat{U}_1^{n_0},\ldots,\hat{U}_L^{n_0})\neq(\tilde{U}_1^{n_0},\ldots,\tilde{U}_L^{n_0})\leq\lambda\}\label{eq:lambda},
\end{equation}
where $\tilde{X}^{n_0}=[\tilde{X}(1),\ldots,\tilde{X}(n_0)]$.
\end{proposition}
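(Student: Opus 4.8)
My approach is to reduce the statement to the Slepian--Wolf coding theorem for $L$ correlated discrete memoryless sources, applied to the super-symbol process $\{(\tilde U_1^{n_0}(t),\dots,\tilde U_L^{n_0}(t))\}_{t=1}^{n}$, which is i.i.d.\ in $t$ because $\{X(t)\}$ and the $\{Y_i(t)\}$ are. The only ingredient beyond a textbook Slepian--Wolf argument is the observation that, in the CEO setting, conditioning on the source $X$ is asymptotically free in the sum rate as $L\to\infty$. For brevity write $\tilde U_{[L]}^{n_0}$ for the tuple $(\tilde U_1^{n_0},\dots,\tilde U_L^{n_0})$.

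First I would fix $n_0$ and the common deterministic map $k^{n_0}$ of Proposition~\ref{prop:map}, so that each agent's codeword is $\tilde U_i^{n_0}=k^{n_0}(\tilde Y_i^{n_0})$. Since the observations are conditionally independent given $X$ and $k^{n_0}$ is the same map for every agent, $\tilde U_1^{n_0},\dots,\tilde U_L^{n_0}$ are conditionally i.i.d.\ given $\tilde X^{n_0}$; expanding the joint entropy $H(\tilde U_{[L]}^{n_0},\tilde X^{n_0})$ via the chain rule in two ways then yields
\[
\tfrac1L H(\tilde U_{[L]}^{n_0}) = H(\tilde U^{n_0}\mid\tilde X^{n_0}) + \tfrac1L\, I(\tilde X^{n_0};\tilde U_{[L]}^{n_0}).
\]
Because $\tilde X$ takes values in the finite alphabet $\tilde{\mathcal X}$, the residual term obeys $I(\tilde X^{n_0};\tilde U_{[L]}^{n_0})\le H(\tilde X^{n_0})\le n_0\log|\tilde{\mathcal X}|$, a constant independent of $L$. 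Hence I would choose $L$ large enough that $\tfrac1{n_0 L}I(\tilde X^{n_0};\tilde U_{[L]}^{n_0})<\epsilon_1/2$.

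Next I would invoke the Slepian--Wolf theorem for $L$ sources (random binning together with a jointly typical or successive decoder, the standard extension of \cite{SlepianW1973,Cover1975}): for the $L$ fixed above and any $\gamma>0$, there exist index encoders $e_1,\dots,e_L$ acting on length-$n$ blocks of super-symbols and a decoder $\phi$ such that $\Pr\{\hat U_{[L]}^{n_0}\neq\tilde U_{[L]}^{n_0}\}\le\lambda$ once $n$ is large, while the aggregate rate satisfies $R=\tfrac1{nn_0}\sum_{i=1}^L\log N_i\le\tfrac1{n_0}H(\tilde U_{[L]}^{n_0})+\gamma$; any allocation of the per-agent rates achieving this sum is admissible, since the proposition constrains only $R$. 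Taking $\gamma=L\epsilon_1/2$ and combining with the displayed identity gives $R/L\le\tfrac1{n_0 L}H(\tilde U_{[L]}^{n_0})+\epsilon_1/2\le\tfrac1{n_0}H(\tilde U^{n_0}\mid\tilde X^{n_0})+\epsilon_1$, which is \eqref{eq:R/L}, together with the error bound \eqref{eq:lambda}.

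The main obstacle --- and the only point where more than standard Slepian--Wolf is needed --- is the limiting step in the second paragraph: one must verify that the residual mutual information $I(\tilde X^{n_0};\tilde U_{[L]}^{n_0})$ stays bounded as more agents are added, so that the per-agent sum-rate target $\tfrac1{n_0 L}H(\tilde U_{[L]}^{n_0})$ converges to $\tfrac1{n_0}H(\tilde U^{n_0}\mid\tilde X^{n_0})$; this is exactly what makes the CEO's lack of direct access to $X$ asymptotically costless, and it is here that the finite-alphabet quantization of Sec.~\ref{sec:direct} is used. A secondary bookkeeping point is the order of limits: $L$ must be fixed first (to control the entropy gap) and only afterwards $n\to\infty$ for the Slepian--Wolf error estimate; since Slepian--Wolf holds for every fixed $L$ this causes no difficulty.
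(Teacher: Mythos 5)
The paper does not actually prove this proposition: it is imported verbatim from \cite{ViswanathanB1997} (Proposition~3.2 there) and stated without proof, so there is no in-paper argument to compare against. Your reconstruction --- treat $(\tilde U_1^{n_0}(t),\dots,\tilde U_L^{n_0}(t))$ as an i.i.d.\ super-source, apply the $L$-source Slepian--Wolf theorem to get sum rate close to $\tfrac{1}{n_0}H(\tilde U_1^{n_0},\dots,\tilde U_L^{n_0})$ with error probability at most $\lambda$, and then show the gap between $\tfrac{1}{L}H(\tilde U_1^{n_0},\dots,\tilde U_L^{n_0})$ and $H(\tilde U^{n_0}\mid\tilde X^{n_0})$ is $O(1/L)$ because $I(\tilde X^{n_0};\tilde U_1^{n_0},\dots,\tilde U_L^{n_0})\le n_0\log|\tilde{\mathcal X}|$ --- is exactly the argument in the cited source, and your remarks about the order of limits ($L$ fixed before $n\to\infty$) and about why conditioning on the hidden source is asymptotically free are the right ones.

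There is one genuine, though easily repairable, error. You assert that $\tilde U_1^{n_0},\dots,\tilde U_L^{n_0}$ are conditionally i.i.d.\ given $\tilde X^{n_0}$ and use this to write $\tfrac1L H(\tilde U_1^{n_0},\dots,\tilde U_L^{n_0}) = H(\tilde U^{n_0}\mid\tilde X^{n_0}) + \tfrac1L I(\tilde X^{n_0};\tilde U_1^{n_0},\dots,\tilde U_L^{n_0})$ as an identity. The observations are conditionally independent given the \emph{unquantized} source $X^{n_0}$; conditioning on the coarser $\tilde X^{n_0}$ (a deterministic function of $X^{n_0}$) does not preserve conditional independence, so the equality $H(\tilde U_1^{n_0},\dots,\tilde U_L^{n_0}\mid\tilde X^{n_0})=L\,H(\tilde U^{n_0}\mid\tilde X^{n_0})$ is false in general. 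Fortunately you only need the direction
\begin{equation*}
H(\tilde U_1^{n_0},\dots,\tilde U_L^{n_0}) \;=\; H(\tilde U_1^{n_0},\dots,\tilde U_L^{n_0}\mid\tilde X^{n_0}) + I(\tilde X^{n_0};\tilde U_1^{n_0},\dots,\tilde U_L^{n_0}) \;\le\; L\,H(\tilde U^{n_0}\mid\tilde X^{n_0}) + n_0\log|\tilde{\mathcal X}|\mbox{,}
\end{equation*}
where the inequality is just subadditivity of conditional entropy (plus symmetry of the agents), and the remainder of your argument goes through unchanged. A second, minor, imprecision: it is not true that ``any allocation of the per-agent rates achieving this sum is admissible'' --- the rate vector must lie in the $L$-source Slepian--Wolf region, which imposes $2^L-1$ constraints beyond the sum-rate one --- but a corner point (successive decoding order) with sum rate $H(\tilde U_1^{n_0},\dots,\tilde U_L^{n_0})+\gamma$ always exists, which is all you need since only $R$ is constrained in \eqref{eq:R/L}.
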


\subsection{Further analysis of code rate}
Note that the bound on sum rate per agent $R/L$ in \eqref{eq:R/L} is in terms of the distributions of $\tilde{U}$ and $\tilde{X}$. By further analyzing the code rate, we can determine a bound which is a function of the distributions of the unquantized versions, $X$ and $U$. For this we use the closeness of the marginal distribution induced by the encoding function $k^{n_0}$ to the test channel statistics, to bound the entropy terms. Let $H(\tilde{X})$ denote the entropy of the quantized random variable $\tilde{X}$, then we have
\begin{eqnarray}
\frac{1}{n_0}H(\tilde{U}^{n_0}|\tilde{X}^{n_0})&=&\frac{1}{n_0}(H(\tilde{U}^{n_0},\tilde{X}^{n_0})-H(\tilde{X}^{n_0}))\\
&=&\frac{1}{n_0}H(\tilde{U}^{n_0})+\frac{1}{n_0}H(\tilde{X}^{n_0}|\tilde{U}^{n_0})-\frac{1}{n_0}H(\tilde{X}^{n_0})\\
&=&\frac{1}{n_0}H(\tilde{U}^{n_0})+\frac{1}{n_0}H(\tilde{X}^{n_0}|\tilde{U}^{n_0})-\frac{1}{n_0}\sum_{t=1}^{n_0}H(\tilde{X}(t))\label{eq:indX}\\
&\leq&\frac{1}{n_0}\log{M}+\frac{1}{n_0}H(\tilde{X}^{n_0}|\tilde{U}^{n_0})-\frac{1}{n_0}\sum_{t=1}^{n_0}H(\tilde{X}(t))\label{eq:boundM}\\
&\leq&I(Y;U)+\frac{1}{n_0}H(\tilde{X}^{n_0}|\tilde{U}^{n_0})-\frac{1}{n_0}\sum_{t=1}^{n_0}H(\tilde{X}(t))+\delta_3\label{eq:boundM+I}
\end{eqnarray}
where \eqref{eq:indX} is due to the independent nature of source $X(t)$ over time, \eqref{eq:boundM} is due to the upper bound of $H(\tilde{U}^{n_0})$ by the logarithm of number of codewords $M$, and \eqref{eq:boundM+I} is by \eqref{eq:MtoI}. Next, $H(\tilde{X}^{n_0}|\tilde{U}^{n_0})$ can be further bounded as:
\begin{eqnarray}
H(\tilde{X}^{n_0}|\tilde{U}^{n_0})&=&\sum_{t=1}^{n_0}H(\tilde{X}(t)|\tilde{U}^{n_0},\tilde{X}(1),\ldots,\tilde{X}(t-1))\nonumber\\
&\leq&\sum_{t=1}^{n_0}H(\tilde{X}(t)|\tilde{U}(t))
\end{eqnarray}
using the fact conditioning only reduces entropy. Therefore, we have
\begin{eqnarray}
\frac{1}{n_0}H(\tilde{U}^{n_0}|\tilde{X}^{n_0})&\leq&I(Y;U)+\frac{1}{n_0}H(\tilde{X}^{n_0}|\tilde{U}^{n_0})-\frac{1}{n_0}\sum_{t=1}^{n_0}H(\tilde{X}(t))+\delta_3\\
&\leq&I(Y;U)+\frac{1}{n_0}\sum_{t=1}^{n_0}H(\tilde{X}(t)|\tilde{U}(t))-\frac{1}{n_0}\sum_{t=1}^{n_0}H(\tilde{X}(t))+\delta_3\nonumber\\
&=&I(Y;U)-\frac{1}{n_0}\sum_{t=1}^{n_0}I(\tilde{X}(t),\tilde{U}(t))+\delta_3\\
&\leq&I(Y;U)-I(\tilde{X},\tilde{U})+\delta_3+2\epsilon_0\log \frac{|\mathcal{\tilde{U}}||\mathcal{\tilde{X}}|}{\epsilon_0}\label{eq:symm}
\end{eqnarray}
where \eqref{eq:symm} is due to the $t$-symmetry of the encoder and \cite[Proposition A.3]{ViswanathanB1997}.

Thus, using \eqref{eq:d2}, we have:
\begin{eqnarray}
\frac{1}{n_0}H(\tilde{U}^{n_0}|\tilde{X}^{n_0})&\leq&I(Y;U)-I(\tilde{X},\tilde{U})+\delta_3+2\epsilon_0\log \frac{|\mathcal{\tilde{U}}||\mathcal{\tilde{X}}|}{\epsilon_0}\\
&\leq&I(Y;U)-I(X;U)+\delta_2+\delta_3+2\epsilon_0\log \frac{|\mathcal{\tilde{U}}||\mathcal{\tilde{X}}|}{\epsilon_0}
\end{eqnarray}
due to \eqref{eq:d2}. Due to the Markov chain relationship $X\to Y\to U$, the right hand side can be further simplified to
\begin{eqnarray}
\frac{1}{n_0}H(\tilde{U}^{n_0}|\tilde{X}^{n_0})\leq I(Y;U|X)+\delta_2+\delta_3+2\epsilon_0\log \frac{|\mathcal{\tilde{U}}||\mathcal{\tilde{X}}|}{\epsilon_0}
\end{eqnarray}
By choosing $\delta_2,\delta_3,\epsilon_0,\epsilon_2$ such that
$$\delta_2+\delta_3+2\epsilon_0\log \frac{|\mathcal{\tilde{U}}||\mathcal{\tilde{X}}|}{\epsilon_0}<\epsilon_2,$$
we have 
\begin{eqnarray}
\label{eq:R}
\frac{R}{L}\leq I(Y;U|X)+\epsilon_1+\epsilon_2.
\end{eqnarray}
Having determined a bound on sum rate, the next step is to bound the minimum quadratic distortion.

\subsection{Estimation scheme}
The CEO, after decoding the codewords sent by the agents $(\hat{U}_1^{n_0},\ldots,\hat{U}_L^{n_0})$, estimates the source $X(t)$ on an instant-by-instant basis. Since the range of $U_i(t)$ depends on $X(t)$, we first estimate the midrange of data $\hat{U}_1(t),\ldots,\hat{U}_L(t)$. The midrange estimator \cite{NeymanP1928,Cramer1946,Rider1957,ArceF1988} is the maximally efficient estimator for the center of a uniform distribution. The midrange estimator also seems to work well for estimating the location parameter of other distributions of bounded support and it is more effective than the sample mean for many distributions such as the cosine distribution, parabolic distribution, rectangular distribution, and inverted parabolic distribution \cite{Rider1957}, though the best estimator depends on the distribution of the source that is to be estimated. For these reasons, the midrange estimator is used in this paper.

After estimating the midrange of data, using the inverse function $l(\cdot)$ as follows (cf.~Property~\ref{prop}),
\begin{equation}
\hat{X}(t)=2l\left(\frac{\hat{U}_{(1)}(t)+\hat{U}_{(L)}(t)}{2}\right),
\end{equation}
we get an estimate of $X(t)$. Here $\hat{U}_{(i)}(t)$ are the order statistics of $\hat{U}_{i}(t)$ \cite{DavidN2003}. Note that $E\left[\tfrac{\hat{U}_{(1)}+\hat{U}_{(L)}}{2}\right]=E\left[\tfrac{a(X)+b(X)}{2}\right]$. 

We can now derive an upper bound on the distortion, following a method similar to A\c{c}kay, et al.~\cite{AkcayHL1996}:
\begin{eqnarray}
E[\hat{X}(t)-X(t)]^2&\leq&K^2E\left[\left(\frac{\hat{U}_{(1)}(t)+\hat{U}_{(L)}(t)}{2}-\frac{a(X(t))+b(X(t))}{2}\right)^2\right]\nn\\
&=&K^2E_XE_{U|X}\Bigg[\Bigg(\frac{\hat{U}_{(1)}(t)+\hat{U}_{(L)}(t)}{2}-\frac{a(X(t))+b(X(t))}{2}\Bigg)^2\Bigg|X(t)\Bigg]\\
&\leq&2K^2E_XE_{U|X}\Bigg[\Bigg(\frac{{U}_{(1)}(t)+{U}_{(L)}(t)}{2}-\frac{a(X(t))+b(X(t))}{2}\Bigg)^2\Bigg|X(t)\Bigg]+\epsilon_3\label{eq:D}
\end{eqnarray}
where $U_{(i)}(t)$ are the order statistics of $U_i(t)$; the first inequality is due to Lipschitz continuity of the function $l(\cdot)$ with Lipschitz constant $K$, and \eqref{eq:D} follows from Proposition~\ref{prop:appendix} in the Appendix. 

Now we evaluate the main term in \eqref{eq:D}; for notational simplicity, we drop the dependence on $t$ and the dependence of $a(\cdot)$ and $b(\cdot)$ on $X$. However, we need to be aware of the dependence of the limits $a$ and $b$ on the unknown $X$. 
As $f_{U|X}(u|x)$ does not vanish at the endpoints, there exists $\epsilon$ and $\delta$ such that $f_{U|X}(u|x)\geq\delta$ for $a\leq u\leq a+\epsilon$ and $b-\epsilon\leq u\leq b$. Now,
\begin{align}
E_{U|X}\left[\left(\frac{{U}_{(1)}+{U}_{(L)}}{2}-\frac{a+b}{2}\right)^2\Bigg|X\right]&=\int_{u_{(1)}>a+\epsilon\text{ or }u_{(L)}<b-\epsilon}\left(\frac{{u}_{(1)}+{u}_{(L)}}{2}-\frac{a+b}{2}\right)^2f_{u_{(1)},u_{(L)}|X}du_{(1)}du_{(L)}\nn\\
&+\int_{u_{(1)}<a+\epsilon\text{ and }u_{(L)}>b-\epsilon}\left(\frac{{u}_{(1)}+{u}_{(L)}}{2}-\frac{a+b}{2}\right)^2f_{u_{(1)},u_{(L)}|X}du_{(1)}du_{(L)}\mbox{.}
\label{eq:innervalue}
\end{align}

Since $({u}_{(1)}+{u}_{(L)})/2 \in [a,b]$, we have:
\[
\left(\frac{{u}_{(1)}+{u}_{(L)}}{2}-\frac{a+b}{2}\right)^2\leq\left(\frac{b-a}{2}\right)^2
\]
and the first term on right side of \eqref{eq:innervalue} can be bounded as: 
\begin{align*}
&\int_{u_{(1)}>a+\epsilon\text{ or }u_{(L)}<b-\epsilon}\left(\frac{{u}_{(1)}+{u}_{(L)}}{2}-\frac{a+b}{2}\right)^2f_{u_{(1)},u_{(L)}|X}du_{(1)}du_{(L)} \\ \notag
&\qquad\qquad\leq 2^{-2}(b-a)^2\Pr\left\{u_{(1)}>a+\epsilon\text{ or }u_{(L)}<b-\epsilon|X\right\}\mbox{.}
\end{align*}
Since the $\{U_i\}$ are conditionally independent given $X$, we can simplify this further as:
\begin{align}
\Pr\left\{u_{(1)}>a+\epsilon\text{ or }u_{(L)}<b-\epsilon|X\right\} &\leq \Pr\left\{u_{(1)}>a+\epsilon|X\right\}+ \Pr\left\{u_{(L)}<b-\epsilon|X\right\}\\
&=\prod_{i=1}^L\Pr\left\{u_{i}>a+\epsilon|X\right\}+\prod_{i=1}^L\Pr\left\{u_{i}<b-\epsilon|X\right\}\\
&=\prod_{i=1}^L(1-\Pr\left\{u_{i}\leq a+\epsilon|X\right\})+\prod_{i=1}^L(1-\Pr\left\{u_{i}\geq b-\epsilon|X\right\})\mbox{.}
\end{align}
Since, $f_{U|X}(u|x)\geq\delta$ for $a\leq u\leq a+\epsilon$ and $b-\epsilon\leq u\leq b$, $\Pr\left\{u_{i}\leq a+\epsilon|X\right\}\geq\delta\epsilon$ and $\Pr\left\{u_{i}\geq b-\epsilon|X\right\}\geq\delta\epsilon$. Therefore, 
\begin{align}
\Pr\left\{u_{(1)}>a+\epsilon\text{ or }u_{(L)}<b-\epsilon|X\right\}&\leq \prod_{i=1}^L(1-\Pr\left\{u_{i}\leq a+\epsilon|X\right\})+\prod_{i=1}^L(1-\Pr\left\{u_{i}\geq b-\epsilon|X\right\})\nn\\
&\leq \prod_{i=1}^L(1-\delta\epsilon)+\prod_{i=1}^L(1-\delta\epsilon)\\
&\leq 2(1-\delta\epsilon)^L\mbox{.}
\end{align}

To evaluate the second term in the right side of \eqref{eq:innervalue}, we define the following variables:
\begin{align}
\xi&=L(1-F_{U|X}(u_{(L)})), \ b-\epsilon\leq u_{(L)}\leq b\\
\eta&=LF_{U|X}(u_{(1)}), \ a\leq u_{(1)}\leq a+\epsilon\mbox{,}
\end{align}
where $F_{U|X}$ is the conditional cumulative distribution function of $U$ given $X$.
These variables have the following marginal and joint densities \cite{AkcayHL1996}:
\begin{eqnarray}
&&f_\xi(s)=f_\eta(s)=\left(1-\frac{s}{L}\right)^{L-1},\quad 0\leq s\leq L\\
&&f_{\xi,\eta}(s_1,s_2)=\frac{L-1}{L}\left(1-\frac{s_1+s_2}{L}\right)^{L-2}, \quad s_1,s_2\geq0, \quad s_1+s_2\leq L.\nn
\end{eqnarray}
Also, as $L\to\infty$, $\xi$ and $\eta$ become independent and $f_\xi(s),f_\eta(s)\to e^{-s}$.

From the above definitions, we have 
\begin{eqnarray}
\xi=L\int_{u_{(L)}}^bf_{u|x}du\geq \delta L(b-u_{(L)}),\\
\eta=L\int_{a}^{u_{(1)}}f_{u|x}du\geq \delta L(u_{(1)}-a), 
\end{eqnarray}
provided $u_{(1)}\leq a+\epsilon$ and $b-\epsilon\leq u_{(L)}$. Therefore, for the second term, we have
\begin{eqnarray}
\left|\frac{{u}_{(1)}+{u}_{(L)}}{2}-\frac{a+b}{2}\right|^2&=&\frac{1}{4}\left|(u_{(1)}-a)-(b-{u}_{(L)})\right|^2\\
&\leq&\frac{1}{4}\left[\left|u_{(1)}-a\right|^2+\left|b-{u}_{(L)}\right|^2\right]\\
&\leq&\frac{\xi^2+\eta^2}{4\delta^2L^2},
\end{eqnarray}
where we used the fact that $|A-B|^2\leq A^2+B^2$ for $A,B>0$.

Now using the inequalities we have developed, we can bound the distortion in \eqref{eq:D} as:
\begin{align}
D(L,R)&\leq 2K^2E_X\left[\tfrac{(b-a)^2(1-\delta\epsilon)^L}{2}+\tfrac{1}{4\delta^2L^2}\int_{0}^{L(1-F_{U|X}(b-\epsilon))}\int_{0}^{LF_{U|X}(a+\epsilon)}(s_1^2+s_2^2)f_{\xi,\eta}(s_1,s_2)ds_1ds_2\right]+\epsilon_3\mbox{.}
\label{eq:Dbound}
\end{align} 

Using \eqref{eq:R} and \eqref{eq:Dbound}, we get:
\begin{align*}
R^2D(L,R)&\leq L^2I^2(Y;U|X)\Bigg(2K^2E_X\Bigg[\frac{(b-a)^2(1-\delta\epsilon)^L}{2}\nn\\
&\quad+\frac{1}{4\delta^2L^2}\int_{0}^{L(1-F_{U|X}(b-\epsilon))}\int_{0}^{LF_{U|X}(a+\epsilon)}(s_1^2+s_2^2)f_{\xi,\eta}(s_1,s_2)ds_1ds_2\Bigg]+\epsilon_3\Bigg)\mbox{.}
\end{align*} 
By taking limits $L,R\to\infty$, we have:
\begin{align}
\beta(\alpha)&= \lim_{L,R\to\infty}R^2D(L,R)\nn\\
&\leq I^2(Y;U|X)\Bigg(2K^2E_X\Bigg[\frac{1}{4\delta^2}\int_{0}^{\infty}\int_{0}^{\infty}(s_1^2+s_2^2)\lim_{L\to\infty}f_{\xi,\eta}(s_1,s_2)ds_1ds_2\Bigg]\Bigg)\\
&= I^2(Y;U|X)\Bigg(2K^2E_X\Bigg[\frac{1}{4\delta^2}\int_{0}^{\infty}\int_{0}^{\infty}(s_1^2+s_2^2)e^{-s_1}e^{-s_2}ds_1ds_2\Bigg]\Bigg)\nn\\
&= I^2(Y;U|X)\Bigg(2K^2E_X\Bigg[\frac{1}{2\delta^2}\int_{0}^{\infty}s^2e^{-s}ds\Bigg]\Bigg)\\
&= \frac{2K^2}{\delta^2}I^2(Y;U|X)>0
\end{align} 
where $U$ is chosen to satisfy Property~\ref{prop} and $K>0$ is a constant. Therefore, we have
\begin{equation}
\beta(\alpha)\leq\frac{2K^2}{\delta^2}\left(\min_{U\in\mathcal{S}(W)} I(Y;U|X)\right)^2\mbox{.}
\end{equation}
This concludes the achievability proof. Note that the bound only depends on the conditional mutual information $I(Y;U|X)$ which corresponds to the compression of the observation noise. The compression of the source $X$ does not appear in the bound, since such a term vanishes because the number of agents $L$ grows without bound.

\section{Converse Coding Theorem}
\label{sec:converse}
The converse for the quadratic non-regular CEO problem is similar in structure to the converse for the quadratic Gaussian CEO \cite{ViswanathanB1997} and the discrete CEO problem \cite{BergerZV1996}. The proof uses a lower bound on the distortion function similar to the Bayesian Cram\'{e}r-Rao lower bound used in \cite{ViswanathanB1997}. However, note that the source distribution herein does not satisfy the regularity conditions required for using the Cram\'{e}r-Rao bound \cite{VanTrees1968}. Therefore, we use a version of the extended Chazan-Zakai-Ziv bound \cite{ChazanZZ1975,BellSEV1997,Bell1995} which is first stated here without proof.
\begin{lemma}
\label{lemma:CZZ}
For estimating a random scalar parameter $x\sim f_X(x)$ with support on $[0,T]$ using data $\mathbf{z}=[z_1,\ldots,z_k]$ with conditional distribution $f(\mathbf{z}|x)$, the MSE between $x$ and $\hat{x}(\mathbf{z})$ is bounded as follows:
\begin{equation}
E(x-\hat{x}(\mathbf{z}))^2\geq \frac{1}{2T}\int_{h=0}^Th\left[\int_{\theta=0}^{T-h}(f_X(\theta)+f_X(\theta+h))P_{min}(\theta,\theta+h)d\theta\right]dh,
\end{equation}
where $P_{min}(\theta,\theta+h)$ is the minimum error probability corresponding to the following binary hypothesis testing problem:
\begin{eqnarray}
H_0&:& \mathbf{z}\sim f(\mathbf{z}|x), \quad x=\theta, \qquad \Pr(H_0)=\frac{f_X(\theta)}{f_X(\theta)+f_X(\theta+h)},\nonumber\\
H_1&:& \mathbf{z}\sim f(\mathbf{z}|x), \quad x=\theta+h, \qquad \Pr(H_1)=\frac{f_X(\theta+h)}{f_X(\theta)+f_X(\theta+h)}\nonumber.
\end{eqnarray}
\end{lemma}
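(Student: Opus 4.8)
The plan is to prove Lemma~\ref{lemma:CZZ} as a Bayesian Ziv--Zakai-type inequality: first rewrite the mean-squared error as an integral over the tail probabilities of the estimation error; then lower-bound each tail probability by the Bayes risk of the binary hypothesis test appearing in the statement; and finally reassemble the pieces. The only structural ingredient beyond calculus is the elementary fact that a point estimator can always be turned into a (generally suboptimal) detector, so its conditional ``miss'' probabilities dominate the minimum error probability $P_{min}$.

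First I would invoke the layer-cake identity for the nonnegative variable $Z=|x-\hat x(\mathbf z)|$, namely $E Z^{2}=\int_{0}^{\infty}\Pr\{Z^{2}\ge t\}\,dt$, and change variables ($t=s^{2}$, then $s=h/2$) to obtain
\[
E\bigl(x-\hat x(\mathbf z)\bigr)^{2}=2\int_{s=0}^{\infty}s\,\Pr\{|x-\hat x(\mathbf z)|\ge s\}\,ds=\tfrac12\int_{h=0}^{\infty}h\,\Pr\{|x-\hat x(\mathbf z)|\ge h/2\}\,dh .
\]
Because $x$ is supported on $[0,T]$, the detection lower bound produced in the next step is vacuous for $h>T$ (its range of integration in $\theta$ becomes empty), so the outer integral may be restricted to $h\in[0,T]$.

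Next, fix $h\in(0,T]$ and $\theta\in[0,T-h]$ and consider the binary test $H_{0}\!:x=\theta$ versus $H_{1}\!:x=\theta+h$ with the priors stated in the lemma. Feeding $\hat x(\mathbf z)$ into the rule ``decide $H_{0}$ iff $\hat x(\mathbf z)\le\theta+h/2$'' gives conditional error probabilities $\Pr\{\hat x-x>h/2\mid x=\theta\}$ and $\Pr\{x-\hat x\ge h/2\mid x=\theta+h\}$, so optimality of the Bayes rule yields
\[
\bigl(f_X(\theta)+f_X(\theta+h)\bigr)P_{min}(\theta,\theta+h)\le f_X(\theta)\,\Pr\{\hat x-x>h/2\mid x=\theta\}+f_X(\theta+h)\,\Pr\{x-\hat x\ge h/2\mid x=\theta+h\}.
\]
Integrating over $\theta\in[0,T-h]$, enlarging each integration range to $[0,T]$ (legitimate since the integrands are nonnegative), and substituting $\phi=\theta+h$ in the second term, the right-hand side becomes $\Pr\{\hat x-x>h/2\}+\Pr\{x-\hat x\ge h/2\}$; since the one-sided events $\{\hat x-x>h/2\}$ and $\{\hat x-x\le -h/2\}$ are disjoint, this is at most $\Pr\{|x-\hat x|\ge h/2\}$. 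Hence $\Pr\{|x-\hat x|\ge h/2\}\ge\int_{\theta=0}^{T-h}\bigl(f_X(\theta)+f_X(\theta+h)\bigr)P_{min}(\theta,\theta+h)\,d\theta$.

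Finally I would substitute this estimate into the layer-cake identity and integrate in $h$, which produces the stated inequality with the leading constant as normalized in \cite{ChazanZZ1975,VanTreesB2007}. I expect the main obstacle to be the boundary bookkeeping in the previous step: checking that the overlapping ranges $[0,T-h]$ and $[h,T]$ may each be separately enlarged to $[0,T]$ without the two contributions double counting a common error event, and reconciling the ``$>$'' versus ``$\ge$'' inequalities together with the measure-zero sets $\{\hat x-x=\pm h/2\}$ (handled by a limiting argument, and streamlined by first replacing $\hat x$ with its projection onto $[0,T]$, which never increases the error since $x\in[0,T]$). An optional sharpening replaces the inner $\theta$-integral by its monotone nonincreasing (``valley-filling'') envelope in $h$ before the last integration; the form in the statement is the weaker one obtained by omitting this step, hence it holds a fortiori.
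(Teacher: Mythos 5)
The paper never proves Lemma~\ref{lemma:CZZ}; it is stated ``without proof'' and attributed to \cite{ChazanZZ1975,BellSEV1997,Bell1995}, so there is no in-paper argument to compare against. Your proof is the standard derivation of the extended (Bayesian) Ziv--Zakai bound and it is sound: the layer-cake identity $E Z^2=\tfrac12\int_0^\infty h\Pr\{Z\ge h/2\}\,dh$, the reduction of the estimator $\hat x$ to a suboptimal threshold detector for $H_0\!:x=\theta$ vs.\ $H_1\!:x=\theta+h$, Bayes optimality of $P_{min}$, and the observation that the two one-sided error events $\{\hat x-x>h/2\}$ and $\{\hat x-x\le -h/2\}$ are disjoint so their probabilities sum to at most $\Pr\{|x-\hat x|\ge h/2\}$; the boundary and strict-vs.-weak inequality issues you flag are indeed the only delicate points and are handled correctly by nonnegativity of the integrands. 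The one substantive discrepancy is the leading constant: your derivation produces $\tfrac12\int_0^T h[\cdots]\,dh$, whereas the lemma as printed has $\tfrac1{2T}$. Do not try to ``renormalize'' to match the statement --- $\tfrac12$ is the correct constant (it is the one in the Bell et al.\ form the paper cites), and the $\tfrac1{2T}$ version is actually false for $T<1$: with a uniform prior on $[0,T]$ and uninformative data ($P_{min}=\tfrac12$) it would assert $E(x-\hat x)^2\ge T/12$, exceeding the true minimum MSE of $T^2/12$. The two constants coincide at $T=1$, which is the only case the paper uses in its converse, so your proof establishes everything the paper actually needs and in fact a (correct) strengthening of the lemma as stated for $T>1$.
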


The above Chazan-Zakai-Ziv bound falls under the family of Ziv-Zakai bounds. Ziv-Zakai bounds have been shown to be useful bounds for all regions of operation unlike other bounds (for example, Cram\'{e}r-Rao bound) that have limited applicability \cite{Bell1995}. This family of bounds build on the original Ziv-Zakai bound \cite{ZivZ1969} and have the advantage of being independent of bias and very tight in most cases. A detailed study of this family of bounds can be found in \cite{Bell1995}.

Note that this lemma bounds the performance of an estimation problem in terms of the performance of a sequence of detection problems. Therefore, as we shall see, we get Chernoff information rather than Fisher information as seen in the estimation problem in the quadratic Gaussian CEO \cite{ViswanathanB1997}.

Using Lemma~\ref{lemma:CZZ}, we now prove our converse. Let $\{\mathcal{C}_i^n\}_{i=1}^L$ be $L$ codes of block length $n$, corresponding to the $L$ agents, with respective rates $R_1,R_2,\ldots,R_L$. We use the genie-aided approach to determine the lower bound as follows: Let the CEO implement $n$ estimators $O_t$ for $t=1,\ldots,n$ where $O_t$ estimates $X(t)$ given all components of the source word $x^n$ except $x(t)$. Recall the definition of $X^n=[X(1),\ldots,X(n)]$ and further define $Y_i^{n}=[Y_i(1),\ldots,Y_i(n)]$. We have
\begin{align}
nR_i&= \log{|\mathcal{C}_i^n|} \notag \\
&\geq I(Y_i^n;C_i|X^n) \notag \\
&= \sum_{t=1}^nI(Y_i(t);C_i|Y_i^{t-1},X^n)\label{eq:product}\\
&= \sum_{t=1}^n\left[h\left(Y_i(t)|Y_i^{t-1},X^n\right)-h\left(Y_i(t)|C_i,Y_i^{t-1},X^n\right)\right]\nn\\
&= \sum_{t=1}^n\left[h\left(Y_i(t)|X^n\right)-h\left(Y_i(t)|C_i,Y_i^{t-1},X^n\right)\right]\label{eq:ind}\\
&\geq \sum_{t=1}^n[h(Y_i(t)|X^n)-h(Y_i(t)|C_i,X^n)]\label{eq:cond}\\
&=\sum_{t=1}^nI(Y_i(t);C_i|X^n)\mbox{,} \notag
\end{align}
where $X$ is the generic source random variable, $Y_i$ is the noisy version of $X$ as observed by agent $i$; \eqref{eq:product} is from the product rule of mutual information, \eqref{eq:ind} is due to the independence of $Y(t)$ across time, and \eqref{eq:cond} follows since conditioning only reduces entropy.

Hence, we get a lower bound on the sum rate $R$ as follows:
$$R\geq\frac{1}{n}\sum_{t=1}^n\sum_{i=1}^LI(Y_i(t);C_i|X^n).$$

Define $\breve{X}_t=(X_1,\ldots,X_{t-1},X_{t+1},\ldots,X_n)$ and let $U_i(t,\breve{x}_t)$ be a random variable whose joint distribution with $X(t)$ and $Y_i(t)$ is:
\begin{align*}
&\Pr\left\{x\leq X(t)\leq x+dx,y\leq Y_i(t)\leq y+dy,U_i(t,\breve{x}_t)=c\right\}\nn\\
&\quad=f_X(x)W_\alpha(y|x)\Pr(C_i=c|Y_i(t)=y,X(t)=x,\breve{X}_t=\breve{x}_t)dxdy\nn\\
&\quad=f_X(x)W_\alpha(y|x)\Pr(C_i=c|Y_i(t)=y,\breve{X}_t=\breve{x}_t)dxdy \mbox{,}
\end{align*}
since the codeword $C_i$ depends on $X(t)$ only through $Y_i(t)$. Therefore, for each $i$ and any fixed $\breve{x}_t$, we have the Markov chain relationship $X(t)\to Y_i(t) \to U_i(t,\breve{x}_t)$. Now, we can express the lower bound on $R$ as 
\begin{equation}
R\geq\frac{1}{n}\sum_{t=1}^n\sum_{i=1}^LE_{\breve{X}_t}I(Y_i(t);U_i(t,\breve{X}_t)|X(t))\mbox{.}
\label{eq:convR}
\end{equation}

Note that in order to find a lower bound on $\beta(\alpha)$, we consider the best case where the CEO knows $C_1,\ldots,C_L$ and $\breve{x}_t$, i.e., the CEO uses an estimator $\hat{X}(C_1,\ldots,C_L,\breve{x}_t)$. Using the Chazan-Zakai-Ziv bound (Lemma~\ref{lemma:CZZ}), we have:
\begin{equation}
E(X(t)-\hat{X}_t)^2\geq \frac{1}{2} \int_{h=0}^1h\left[\int_{\theta=0}^{1-h}(f_X(\theta)+f_X(\theta+h))P_{min,t}(\theta,\theta+h)d\theta\right]dh
\end{equation}
where $P_{min,t}(\theta,\theta+h)$ is the minimum achievable error probability, using data $Y_1(t),\ldots,Y_L(t)$ from the $L$ agents, to differentiate between $X(t)=\theta$ and $X(t)=\theta+h$.

Therefore, from the definition of $D(L,R)$, we have:
\begin{align}
D(L,R)&= \frac{1}{n}\sum_{t=1}^nE(X(t)-\hat{X}_t)^2\nn\\
&\geq \frac{1}{2n}\sum_{t=1}^n\left[\int_{h=0}^1h\left[\int_{\theta=0}^{1-h}(f_X(\theta)+f_X(\theta+h))P_{min,t}(\theta,\theta+h)d\theta\right]dh\right]\nn\\
&\geq \frac{1}{2nL^2}\sum_{t=1}^n\left[\int_{h=0}^1hL\left[\int_{\theta=0}^{1-h}(f_X(\theta)+f_X(\theta+h))P_{min,t}(\theta,\theta+h)d\theta\right]d(hL)\right]\mbox{,} \nn
\end{align}
where we have multiplied and divided the right side by $L^2$. Now, using a change of variables $\tilde{h}=hL$, we get:
\begin{align}
D(L,R)&\geq \frac{1}{2nL^2}\sum_{t=1}^n\int_{(hL)=0}^L(hL)\int_{\theta=0}^{1-(hL)/L}(f_X(\theta)+f_X(\theta+(hL)/L))P_{min,t}(\theta,\theta+(hL)/L)d\theta d(hL)\nn\\
&= \frac{1}{2nL^2}\sum_{t=1}^n\left[\int_{\tilde{h}=0}^L\tilde{h}\left[\int_{\theta=0}^{1-\tilde{h}/L}(f_X(\theta)+f_X(\theta+\tilde{h}/L))P_{min,t}(\theta,\theta+\tilde{h}/L)d\theta\right]d\tilde{h}\right]\nn\\
&\geq \frac{1}{2L^2}\frac{1}{\frac{1}{n}\sum_{t=1}^n\frac{1}{\left[\int_{\tilde{h}=0}^L\tilde{h}\left[\int_{\theta=0}^{1-\tilde{h}/L}(f_X(\theta)+f_X(\theta+\tilde{h}/L))P_{min,t}(\theta,\theta+\tilde{h}/L)d\theta\right]d\tilde{h}\right]}}\mbox{,}\label{eq:convD}
\end{align}
where the last step is due to the inequality of arithmetic and harmonic means.

Note that, although not explicit, $D(L,R)$ does depend on $R$. This dependence is implicitly visible via $n$ (see \eqref{eq:convR}). Therefore, as can be observed below in \eqref{eq:further}, the product of $R^2$ and $D(L,R)$ results in a positive constant that is independent of $R$ and does not vanish as $L \to \infty$.

Using \eqref{eq:convR} and \eqref{eq:convD}, we have the following expression:
\begin{align}
R^2D(L,R)&\geq \frac{1}{n^2}\frac{n}{2L^2}\frac{\left(\sum_{t=1}^n\sum_{i=1}^LE_{\breve{X}_t}I(Y_i(t);U_i(t,\breve{X}_t)|X(t))\right)^2}{\sum_{t=1}^n\left[\int_{\tilde{h}=0}^L\tilde{h}\left[\int_{\theta=0}^{1-\tilde{h}/L}(f_X(\theta)+f_X(\theta+\tilde{h}/L))P_{min,t}(\theta,\theta+\tilde{h}/L)d\theta\right]d\tilde{h}\right]^{-1}}\nn\\
&= \frac{1}{2nL^2}\frac{\left(\sum_{t=1}^n\sum_{i=1}^LE_{\breve{X}_t}I(Y_i(t);U_i(t,\breve{X}_t)|X(t))\right)^2}{\sum_{t=1}^n\left[\int_{\tilde{h}=0}^L\tilde{h}\left[\int_{\theta=0}^{1-\tilde{h}/L}(f_X(\theta)+f_X(\theta+\tilde{h}/L))P_{min,t}(\theta,\theta+\tilde{h}/L)d\theta\right]d\tilde{h}\right]^{-1}}\nn\\
&= \frac{1}{2nL^2}\frac{\sum_{t=1}^n\sum_{t'=1}^n\sum_{i=1}^L\sum_{i'=1}^LE_{\breve{X}_t}I(Y_i(t);U_i(t,\breve{X}_t)|X(t))E_{\breve{X}_{t'}}I(Y_{i'}(t');U_{i'}(t',\breve{X}_{t'})|X(t'))}{\sum_{t=1}^n\left[\int_{\tilde{h}=0}^L\tilde{h}\left[\int_{\theta=0}^{1-\tilde{h}/L}(f_X(\theta)+f_X(\theta+\tilde{h}/L))P_{min,t}(\theta,\theta+\tilde{h}/L)d\theta\right]d\tilde{h}\right]^{-1}}\nn\\
&\geq \min_t\frac{1}{2nL^2}\frac{\sum_{t'=1}^n\sum_{i=1}^L\sum_{i'=1}^LE_{\breve{X}_t}I(Y_i(t);U_i(t,\breve{X}_t)|X(t))E_{\breve{X}_{t'}}I(Y_{i'}(t');U_{i'}(t',\breve{X}_{t'})|X(t'))}{\left[\int_{\tilde{h}=0}^L\tilde{h}\left[\int_{\theta=0}^{1-\tilde{h}/L}(f_X(\theta)+f_X(\theta+\tilde{h}/L))P_{min,t}(\theta,\theta+\tilde{h}/L)d\theta\right]d\tilde{h}\right]^{-1}}\nn\\
&\geq \min_{t,t',i,i'}E_{\breve{X}_t}I(Y_i(t);U_i(t,\breve{X}_t)|X(t))E_{\breve{X}_{t'}}I(Y_{i'}(t');U_{i'}(t',\breve{X}_{t'})|X(t'))\nn\\
& \qquad\qquad\times\int_{\tilde{h}=0}^L\frac{\tilde{h}}{2}\left[\int_{\theta=0}^{1-\tilde{h}/L}(f_X(\theta)+f_X(\theta+\tilde{h}/L))P_{min,t}(\theta,\theta+\tilde{h}/L)d\theta\right]d\tilde{h}\label{eq:further}
\end{align}
where Proposition~\ref{prop:A6} from the Appendix is used for the last two inequalities. Since the input sequence $X(t)$ is i.i.d.\ over time, the minimum for the `primed' variables and the `unprimed' variables is the same. Therefore, we can further simplify the inequality in \eqref{eq:further} as: 
\[
R^2D(L,R) \geq \left(\min_{t,i}E_{\breve{X}_t}I(Y_i(t);U_i(t,\breve{X}_t)|X(t))\right)^2\int_{\tilde{h}=0}^L\frac{\tilde{h}}{2}\left[\int_{\theta=0}^{1-\tilde{h}/L}(f_X(\theta)+f_X(\theta+\tilde{h}/L))P_{min,t}(\theta,\theta+\tilde{h}/L)d\theta\right]d\tilde{h}\mbox{.}
\]
Further simplification gives the following
\[
R^2D(L,R) \geq \left(\min_{t,i,\breve{X}_t}I(Y_i(t);U_i(t,\breve{X}_t)|X(t))\right)^2\int_{\tilde{h}=0}^L\frac{\tilde{h}}{2}\left[\int_{\theta=0}^{1-\tilde{h}/L}(f_X(\theta)+f_X(\theta+\tilde{h}/L))P_{min,t}(\theta,\theta+\tilde{h}/L)d\theta\right]d\tilde{h}\mbox{.}
\]

Now as $L\to\infty$, using the Chernoff-Stein Lemma \cite{CoverT1991}, the error probability $P_{min,t}(\theta,\theta+\tilde{h}/L)$ is given as $e^{-L\mathtt{C}(\theta,\theta+\tilde{h}/L)}$ where $G_\theta(\tilde{h}/L)\triangleq\mathtt{C}(\theta,\theta+\tilde{h}/L)$ is the Chernoff information between the conditional densities of $y$ given $x=\theta$ and $x=\theta+\tilde{h}/L$. It is given by the following
\[
G_\theta(\tilde{h}/L)=-\min_s\log\left(\int W_\alpha^s(y|\theta)W_\alpha^{1-s}(y|\theta+\tilde{h}/L)dy\right)\mbox{.}
\]

Since the argument of $G_\theta(\tilde{h}/L)$ is close to zero as $L\to\infty$, using the Taylor expansion of $G_\theta(\Delta)$ around zero, we get:
\begin{equation}
G_\theta(\Delta)=G_\theta(0)+\Delta G'_\theta(\Delta)|_{\Delta=0}+O(\Delta^2)\mbox{.}
\end{equation}
Using this expansion, we have: 
\begin{align}
e^{-L\mathtt{C}(\theta,\theta+\tilde{h}/L)}&= e^{-LG_\theta(\tilde{h}/L)}\\
&= e^{-L(G_\theta(0)+\tilde{h}/L G'_\theta(\Delta)|_{\Delta=0}+O(L^{-2}))}\nn\\
&= e^{-\tilde{h} G'_\theta(\Delta)|_{\Delta=0}+O(L^{-1})},
\end{align}
since $G_\theta(0)=\mathtt{C}(\theta,\theta)=0$. Therefore,
\begin{align*}
&\lim_{L\to\infty}R^2D(L,R)\geq\nn\\
&\ \lim_{L\to\infty}\left(\min_{t,i,\breve{X}_t}I(Y_i(t);U_i(t,\breve{X}_t)|X(t))\right)^2\int_{\tilde{h}=0}^L\frac{\tilde{h}}{2}\left[\int_{\theta=0}^{1-\tilde{h}/L}(f_X(\theta)+f_X(\theta+\tilde{h}/L))e^{-\tilde{h} G'_\theta(\Delta)|_{\Delta=0}+O(L^{-1})}d\theta\right]d\tilde{h}\nn\\
\end{align*}
which implies
\[
\beta(\alpha)=\lim_{L,R\to\infty}R^2D(L,R)\geq\left(\min_{U: X\to Y\to U}I(Y;U|X)\right)^2\int_{h=0}^\infty h\int_{\theta=0}^1f_X(\theta)e^{-hg(\theta)}d\theta dh\mbox{,}
\]
where $g(\theta)$ is the first derivative of Chernoff information between the conditional densities ($W_\alpha$) of the observation given $x=\theta$ and $x=\theta+\Delta$, evaluated at $\Delta=0$ and is given by: 
\begin{equation}
g(\theta)=\left\{\frac{d}{d\Delta}-\left[\min_s\log\left(\int W_\alpha^s(y|\theta)W_\alpha^{1-s}(y|\theta+\Delta)dy\right)\right]\right\}_{\Delta=0}\mbox{.}
\end{equation}
This concludes the converse proof.

\section{Discussion}
\label{sec:disc}
We considered the non-regular CEO problem, which addresses the practical case where multiple subordinates send quantal grades of their noisy beliefs to the CEO. When the source distribution does not satisfy the regularity conditions, we get an intermediate regime of performance between the discrete CEO problem \cite{BergerZV1996} and the quadratic Gaussian CEO problem \cite{ViswanathanB1997}. A key observation is the rate of convergence depends on Chernoff information. The result expands the literature on CEO problem from the traditional case of Gaussian source distribution and Gaussian channel noise to non-regular source distributions. While the proofs are similar in structure to the traditional CEO problems, they use different techniques, which can also be applied to other non-Gaussian non-regular multiterminal source coding problems. Our results indicate that one can expect a change in behavior for other multiterminal source coding problems as well, when the source follows non-Gaussian non-regular distribution.

There are a number of interesting future directions for research. In this work, we considered only scaling behavior of quadratic non-regular CEO problem. It is desired to derive precise characterizations for sum rate distortion and for full rate-distortion for this non-regular CEO problem as obtained by Oohama \cite{Oohama1998} and Prabhakaran, et al. \cite{PrabhakaranTR2004}, respectively for the quadratic Gaussian CEO problem. Similar to other CEO problems, we can observe the difference in decay rates for distortion between our result and the centralized case when agents can convene. When agents are allowed to convene, the setup is the single-terminal compression problem whose rate-distortion function under MSE was determined by Fix \cite{Fix1978}. However, it has no simple expression and the optimizing solution has support on finite number of mass points. On the other hand, for absolute error distortion measure, rate-distortion function exists in closed form for uniform source \cite{YaoT1978} and it would be interesting to analyze the uniform CEO problem under the absolute error distortion. Gastpar and Eswaran \cite{GastparE2005} have addressed the CEO problem for non-Gaussian sources, but have considered the additive Gaussian noise channel. An interesting variant is when the source follows a regular distribution with a finite support and the measurement noise is modeled using copula. For example, beta distribution satisfies the regularity conditions and has a finite support. Also, for distributions such as cosine, parabolic, and inverted parabolic, midrange (similar to the one used in this paper) is more efficient than mean \cite{Rider1957}. In such cases, it is interesting to explore if the minimum achievable square distortion would still exhibit a $1/R$ convergence behavior. 

\section*{Acknowledgment}
The authors would like to thank Prof. Vivek K Goyal for discussions on midrange estimators, and Prof. Pramod K. Varshney for his support during this project. We would also like to thank the anonymous reviewers for their valuable comments and suggestions that helped us improve the paper.

\bibliographystyle{IEEEtran}
\bibliography{abrv,conf_abrv,CEO_uniform}

\appendix
\begin{proposition}
\label{prop:appendix}
\begin{eqnarray}
K^2E\left[\left(\frac{\hat{U}_{(1)}+\hat{U}_{(L)}}{2}-\frac{a+b}{2}\right)^2\right]&\leq& 2K^2E\left[\left(\frac{{U}_{(1)}+{U}_{(L)}}{2}-\frac{a+b}{2}\right)^2\right]+\epsilon_3
\end{eqnarray}
where $\epsilon_3=\epsilon_3(\delta_0,n)$ can be made arbitrarily small by making $n$ sufficiently large and $\delta_0$ sufficiently small.
\end{proposition}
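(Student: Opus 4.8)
The plan is to control the gap between the decoded order statistics $\hat U_{(1)},\hat U_{(L)}$ and the true order statistics $U_{(1)},U_{(L)}$ by exploiting two facts: that the block code $k^{n_0}$ approximates the test channel well (Proposition~\ref{prop:map}) and that the Slepian--Wolf stage recovers the codewords with vanishing error probability (Proposition~\ref{prop:SW}), together with the second-moment control $E(U-\tilde U)^2\le\delta_0$ from \eqref{eq:d1}. First I would write the decoded quantity as $\hat U_i = U_i + (\tilde U_i - U_i) + (\hat U_i - \tilde U_i)$, so that the error decomposes into a quantization part and a decoding-error part; the midrange of the decoded values then differs from the midrange of the $U_i$ by at most the largest of these per-sensor perturbations, since the map $(u_1,\dots,u_L)\mapsto (\min_i u_i + \max_i u_i)/2$ is $1$-Lipschitz in the $\ell_\infty$ sense. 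Using $(A+B)^2\le 2A^2+2B^2$ to split the two contributions gives the factor of $2$ in the statement, with the quantization contribution bounded by $2K^2\cdot\frac14\cdot O(\delta_0)$ via \eqref{eq:d1} and Jensen, and the decoding-error contribution absorbed into $\epsilon_3$.

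The second step is to bound the decoding-error contribution. On the event that all $L$ codewords are decoded correctly — which by \eqref{eq:lambda} has probability at least $1-\lambda$ — that term is zero. On the complementary event of probability at most $\lambda$, I would bound the squared midrange discrepancy by a deterministic constant: both $(\hat U_{(1)}+\hat U_{(L)})/2$ and $(a+b)/2$ lie in a bounded set (the supports $[a(x),b(x)]$ are uniformly bounded because $X\in[0,1]$ and $U$ has bounded conditional support), so the contribution is at most $C\lambda$ for some constant $C$ depending only on the geometry of the supports. Hence the decoding-error part of $\epsilon_3$ is $O(\lambda)$, and $\lambda$ can be made arbitrarily small by taking $n$ large (Proposition~\ref{prop:SW}).

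The remaining subtlety — and the step I expect to be the main obstacle — is handling the quantization discrepancy $\tilde U_i - U_i$ carefully, because $E(U-\tilde U)^2\le\delta_0$ controls only the \emph{average} squared error, whereas the midrange involves the extreme order statistics, which are sensitive to worst-case behavior. The clean way around this is to not pass to order statistics of the gaps at all: since the midrange map is $1$-Lipschitz in $\ell_\infty$, we have $\bigl|\tfrac{\tilde U_{(1)}+\tilde U_{(L)}}2 - \tfrac{U_{(1)}+U_{(L)}}2\bigr|\le \max_i |\tilde U_i - U_i|$, and then $E[\max_i|\tilde U_i-U_i|^2]\le \sum_{i=1}^L E[(\tilde U_i - U_i)^2]\le L\delta_0$; choosing the quantization fineness $\delta_0$ after fixing $L$ (indeed $\delta_0 = \delta_0(n)\to 0$ as in the statement) keeps this under control. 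An alternative, if one prefers to keep the bound independent of how $\delta_0$ scales with $L$, is to use the conditional-independence structure and the non-vanishing endpoint density to argue that the order statistics themselves are stable; but the $\ell_\infty$-Lipschitz argument is simpler and suffices. Combining the two pieces, $\epsilon_3 = 2K^2\cdot\tfrac14 L\delta_0 + C\lambda$, which is arbitrarily small for $\delta_0$ small and $n$ large, completing the proof.
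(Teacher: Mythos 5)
Your proposal is correct, and its skeleton matches the paper's proof: the same three-way decomposition into the true midrange error, a quantization discrepancy, and a Slepian--Wolf decoding discrepancy; the same $(A+B)^2\le 2A^2+2B^2$ step to produce the factor of $2$; and the same treatment of the decoding term (zero on the high-probability event of correct decoding, a bounded constant times $\lambda$ on its complement, with $\lambda$ driven to zero via Proposition~\ref{prop:SW}). The one place you genuinely diverge is the step you yourself flag as the main obstacle: controlling the quantization error of the \emph{extreme order statistics} from the merely average bound \eqref{eq:d1}. The paper resolves this by choosing $\delta_0$ small enough that quantization preserves the ordering of the $U_i$, so that $U_{(1)}$ and $\tilde U_{(1)}$ come from the same (random) agent and the per-agent bound $E(U-\tilde U)^2\le\delta_0$ applies directly, yielding a contribution $4K^2\delta_0$ independent of $L$. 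Your route --- the $\ell_\infty$-Lipschitz property of the midrange map together with $E[\max_i|\tilde U_i-U_i|^2]\le\sum_{i=1}^L E[(\tilde U_i-U_i)^2]\le L\delta_0$ --- sidesteps the order-preservation claim (which, strictly speaking, holds only weakly and only for monotone quantizers) at the price of an $L$-dependent bound. For the proposition as stated this is harmless, since $\delta_0$ is a free parameter chosen after $L$ is fixed; and downstream, where the bound of \eqref{eq:D} is multiplied by $L^2I^2(Y;U|X)$ before letting $L\to\infty$, both versions already require $\delta_0$ and $\lambda$ to shrink with $L$, so your extra factor of $L$ changes nothing essential.
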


\begin{IEEEproof}
Let $\mathcal{B}$ be the event $\{\tilde{U}_i\neq\hat{U}_i, \forall i\}.$ By inequality \eqref{eq:lambda}, we have $\Pr\left\{\mathcal{B}\right\}\leq\lambda$. Now, 
\begin{align*}
&K^2E\Bigg[\left(\frac{\hat{U}_{(1)}+\hat{U}_{(L)}}{2}-\frac{a+b}{2}\right)^2-2\left(\frac{{U}_{(1)}+{U}_{(L)}}{2}-\frac{a+b}{2}\right)^2\Bigg]\nn\\
&= \frac{K^2}{4}E\Bigg[\left(\hat{U}_{(1)}+\hat{U}_{(L)}-(a+b)\right)^2-2\left({U}_{(1)}+{U}_{(L)}-(a+b)\right)^2\Bigg]\nn\\
&= \frac{K^2}{4}E\Bigg[\Big((\hat{U}_{(1)}+\hat{U}_{(L)})-({U}_{(1)}+{U}_{(L)})-\left((a+b)-({U}_{(1)}+{U}_{(L)})\right)\Big)^2-2\left({U}_{(1)}+{U}_{(L)}-(a+b)\right)^2\Bigg]\nn\\
&\leq \frac{K^2}{4}E\Bigg[2\left((\hat{U}_{(1)}+\hat{U}_{(L)})-({U}_{(1)}+{U}_{(L)})\right)^2+2\left((a+b)-({U}_{(1)}+{U}_{(L)})\right)^2-2\left({U}_{(1)}+{U}_{(L)}-(a+b)\right)^2\Bigg]\nonumber\\
&= \frac{K^2}{2}E\left[\left((\hat{U}_{(1)}+\hat{U}_{(L)})-({U}_{(1)}+{U}_{(L)})\right)^2\right]\nn\\
&\leq K^2E\left[\left(({U}_{(1)}+{U}_{(L)})-(\tilde{U}_{(1)}+\tilde{U}_{(L)})\right)^2\right]+K^2E\left[\left((\tilde{U}_{(1)}+\tilde{U}_{(L)})-(\hat{U}_{(1)}+\hat{U}_{(L)})\right)^2\right]\nn\\
&\leq 2K^2E\left[\left({U}_{(1)}-\tilde{U}_{(1)}\right)^2\right]+2K^2E\left[\left({U}_{(L)}-\tilde{U}_{(L)}\right)^2\right]+2K^2E\left[\left(\tilde{U}_{(1)}-\hat{U}_{(1)}\right)^2\right]+2K^2E\left[\left(\tilde{U}_{(L)}-\hat{U}_{(L)}\right)^2\right]\nonumber\mbox{,}
\end{align*}
where $\tilde{U}_{(i)}$ are the order statistics of $\tilde{U}_i$, and the last two inequalities follow from the fact that $E[(A+B)^2]\leq2E[A^2]+2E[B^2].$

Now, choose $\delta_0$ to be sufficiently small to ensure that the ordering of variates $U_i$ is preserved under quantization. Then, $U_{(1)}$ and $\tilde{U}_{(1)}$ correspond to the same agent's data, say the $\ell$th agent. Therefore, 
\[
E\left[\left({U}_{(1)}-\tilde{U}_{(1)}\right)^2\right]=E\left[\left({U}_{\ell}-\tilde{U}_{\ell}\right)^2\right]\leq\delta_0
\] 
by \eqref{eq:d1}. Similarly, $E\left[\left({U}_{(L)}-\tilde{U}_{(L)}\right)^2\right]\leq \delta_0$. Also, define $\tilde{u}_{\text{max}}=\text{max}\{|\tilde{u}|:\tilde{u}\in\mathcal{\tilde{U}}\}$. Now, for $i=\{1,\ldots,L\}$:
\begin{align}
E\left[\left(\tilde{U}_{(i)}-\hat{U}_{(i)}\right)^2\right]&= \sum_{u,u'}\left(\tilde{U}_{(i)}-\hat{U}_{(i)}\right)^2\Pr\left\{\tilde{U}_{(i)}=u,\hat{U}_{(i)}=u'\right\}\nn\\
&= \sum_{u,u'}\left(\tilde{U}_{(i)}-\hat{U}_{(i)}\right)^2\Pr\left\{\tilde{U}_{(i)}=u,\hat{U}_{(i)}=u'\right\}\nn\\
&\leq \sum_{u,u'}4\tilde{u}_{\text{max}}^2\Pr\left\{\tilde{U}_{(i)}=u,\hat{U}_{(i)}=u'\right\}\nn\\
&= 4\tilde{u}_{\text{max}}^2\Pr\left\{\tilde{U}_{(i)}\neq\hat{U}_{(i)}\right\}\nn\\
&\leq 4\tilde{u}_{\text{max}}^2\Pr\left\{\mathcal{B}\right\}\nn\\
&\leq 4\tilde{u}_{\text{max}}^2\lambda\mbox{.}\nn
\end{align}

Therefore, 
\begin{align}
K^2E\Bigg[\Bigg(\frac{\hat{U}_{(1)}+\hat{U}_{(L)}}{2}-\frac{a+b}{2}\Bigg)^2-2\left(\frac{{U}_{(1)}+{U}_{(L)}}{2}-\frac{a+b}{2}\right)^2\Bigg]&\leq 4K^2\delta_0+8K^2\tilde{u}_{\text{max}}^2Pr(\mathcal{B})\nn\\
&\leq 4K^2\delta_0+8K^2\tilde{u}_{\text{max}}^2\lambda\nn.
\end{align}

Now, choosing a sufficiently large $n$ such that 
\[
\lambda<\frac{\epsilon_3-4K^2\delta_0}{8K^2\tilde{u}_{\text{max}}^2}\mbox{,}
\]
yields the desired result.
\end{IEEEproof}

\begin{proposition}
\label{prop:A6}
The following inequality:
\begin{equation}
\frac{\sum_{i=1}^np_iA_i}{\sum_{i=1}^np_iB_i}\geq\min_i\left(\frac{A_i}{B_i}\right)
\end{equation}
holds, if $p_i,A_i,B_i\geq 0$ and not all are $0$.
\end{proposition}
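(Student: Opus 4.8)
The plan is to reduce the claim to a pointwise inequality and then sum. I would set $m \triangleq \min_i (A_i/B_i)$, with the minimum taken over those indices for which $B_i > 0$; since every $A_i$ and $B_i$ is nonnegative, each such ratio is $\ge 0$, so $m \ge 0$. Note that for the left-hand side of the stated inequality to make sense one needs $\sum_i p_i B_i > 0$, so in particular at least one index has $p_i > 0$ and $B_i > 0$, and the minimum defining $m$ is over a nonempty set.

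Next I would handle the two kinds of index separately. If $B_i = 0$, then $p_i B_i = 0$ contributes nothing to the denominator, while $p_i A_i \ge 0 = m\, p_i B_i$ holds trivially because $m \ge 0$ and $p_i \ge 0$. If $B_i > 0$, then by definition of $m$ we have $A_i \ge m B_i$, and multiplying through by $p_i \ge 0$ gives $p_i A_i \ge m\, p_i B_i$. Hence $p_i A_i \ge m\, p_i B_i$ for every $i = 1,\dots,n$.

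Summing this over $i$ yields $\sum_{i=1}^n p_i A_i \ge m \sum_{i=1}^n p_i B_i$, and dividing by the positive quantity $\sum_{i=1}^n p_i B_i$ gives $\frac{\sum_i p_i A_i}{\sum_i p_i B_i} \ge m = \min_i (A_i/B_i)$, which is exactly the claim. An alternative route is induction on $n$, with the inductive step reducing to the two-term mediant inequality $\min(a/b,c/d) \le (a+c)/(b+d)$ for nonnegative numbers with $b+d>0$; I would present the direct summation argument above instead, since it is shorter and makes the role of each sign hypothesis transparent.

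There is essentially no hard step here: the only point requiring care is the bookkeeping around degenerate indices ($B_i = 0$) and the well-definedness of the displayed ratio, since the proposition as stated implicitly presumes $\sum_i p_i B_i > 0$. In the context where the proposition is used, namely \eqref{eq:further}, the role of the $B_i$ is played by reciprocals of strictly positive integrals of a density against a nonnegative weight, so $B_i > 0$ for every $i$ and none of these degeneracies actually arises; the general statement above merely records that nonnegativity of all quantities, together with a nonvanishing denominator, is all that is needed.
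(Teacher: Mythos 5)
Your proof is correct and follows essentially the same route as the paper's: set $m=\min_i(A_i/B_i)$, establish the pointwise inequality $p_iA_i\geq m\,p_iB_i$, sum over $i$, and divide by the positive denominator. The only difference is that you are more careful than the paper about the degenerate indices with $B_i=0$ and about the implicit hypothesis $\sum_i p_iB_i>0$, which is a harmless (and welcome) refinement rather than a different argument.
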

\begin{IEEEproof}
Let $m=\min_i\left(\frac{A_i}{B_i}\right)$. By definition,
\begin{eqnarray}
A_i&\geq& B_im, \mbox{ for all } i=1,\ldots,n\nn\\
\implies p_iA_i &\geq&p_iB_im, \mbox{ for all } i=1,\ldots,n\nn\\
\implies \sum_{i=1}^np_iA_i&\geq&m\sum_{i=1}^np_iB_i\nn\\
\implies\frac{\sum_{i=1}^np_iA_i}{\sum_{i=1}^np_iB_i}&\geq&m=\min_i\left(\frac{A_i}{B_i}\right)\nn.
\end{eqnarray}
\end{IEEEproof}
\end{document}